\newtheorem{lem}{Lemma}
\newtheorem{prop}{Proposition}
\newtheorem{defn}{Definition}
\newcommand{\bb}[1]{\mathbb{#1}}
\newcommand{\A}[1]{\mathcal{#1}}
\newcommand{\fref}[1]{Fig.~\ref{#1}}
\newcommand{\sref}[1]{Section~\ref{#1}}
\newcommand{\tref}[1]{Table~\ref{#1}}
\newcommand{\pref}[1]{Prop.~\ref{#1}}
\newcommand{\lref}[1]{Lemma~\ref{#1}}
\begin{document}
\conferenceinfo{MobiHoc'08,} {May 26--30, 2008, Hong Kong SAR, China.}
\CopyrightYear{2008}
\crdata{978-1-60558-083-9/08/05}

\title{Analyzing DISH for Multi-Channel MAC Protocols\\in Wireless Networks}

\numberofauthors{2}
\author{
    \alignauthor Tie Luo and Mehul Motani\\
      \affaddr{Electrical \& Computer Engineering}\\
      \affaddr{National University of Singapore}\\
      \email{\{tie,motani\}@nus.edu.sg}
    \alignauthor Vikram Srinivasan\\
      \affaddr{Bell Labs Research}\\
      \affaddr{India}\\
      \email{vikramsr@alcatel-lucent.com}
}

\maketitle

\begin{abstract}
For long, node cooperation has been exploited as a {\em data relaying} mechanism. However, the wireless channel allows for much richer interaction between nodes. One such scenario is in a multi-channel environment, where transmitter-receiver pairs may make incorrect decisions (e.g., in selecting channels) but idle neighbors could help by sharing information to prevent undesirable consequences (e.g., data collisions). This represents a {\em Distributed Information SHaring} (DISH) mechanism for cooperation and suggests new ways of designing cooperative protocols.
However, what is lacking is a theoretical understanding of this new notion of cooperation. In this paper, we view cooperation as a network resource and evaluate the {\em availability of cooperation} via a metric, $p_{co}$, the probability of obtaining cooperation.
First, we analytically evaluate $p_{co}$ in the context of multi-channel multi-hop wireless networks. Second, we verify our analysis via simulations and the results show that our analysis accurately characterizes the behavior of $p_{co}$ as a function of underlying network parameters. This step also yields important insights into DISH with respect to network dynamics. Third, we investigate the correlation between $p_{co}$ and network performance in terms of collision rate, packet delay, and throughput. The results indicate a near-linear relationship, which may significantly simplify performance analysis for cooperative networks and suggests that $p_{co}$ be used as an appropriate performance indicator itself. Throughout this work, we utilize, as appropriate, three different DISH contexts --- model-based DISH, ideal DISH, and real DISH --- to explore $p_{co}$.
\end{abstract}

\vspace{-2mm}
\category{C.4}{Performance Of Systems}{}[Performance attributes]
\category{C.2.1}{Computer-Communication Networks}{Network Architecture and Design}[Wireless communication]

\vspace{-3mm}
\terms{Theory, Performance}

\vspace{-3mm}
\keywords{Distributed information sharing, cooperative communication}

\section{Introduction} \label{sec:intro}

Cooperative diversity is not a new concept in wireless communications.  Key ideas and results in cooperative communications can be traced back to the 1970s to van der Meulen \cite{van3term} and Cover \& El Gamal~\cite{cover79}, whose works have spurred numerous studies on this topic from an information-theoretic perspective (e.g., \cite{coopcap06,laneman03,coop03toc1,coop03toc2}) or a protocol-design perspective (e.g., \cite{rdcf05infocom,cmac05globecom,cdmac07icc,comac07jsac}).
To date, cooperation has been intensively studied in various contexts. However, to the best of our knowledge, it has always been used as a {\em data relaying} mechanism where intermediate nodes help relay packets from a transmitter to a receiver. In fact, the wireless channel allows for much richer interaction among nodes. Consider a scenario where orthogonal frequency channels are available. A node wishes to select a conflict-free channel to transmit data, but may often fail to achieve this due to lack of sufficient information about channel usage. In this case, other nodes in the neighborhood may possess the information in need and thus could help by sharing this information.

This shows that cooperation can be used as a {\em Distributed Information SHaring} (DISH) mechanism, in addition to mere data relaying. In \cite{tie06cam,tie07mobicom}, we proposed a multi-channel MAC protocol based on this idea, where performance enhancement was demonstrated via simulations. In this paper, we develop a theoretical treatment of this new notion of cooperation, in particular, the {\em availability of cooperation}.  The benefit of DISH is that it can remove the need of using multiple transceivers \cite{dca00,nas99,jain01,mup04,mah06} and time synchronization \cite{chen03,so04,tmmac07,chma00,chat00,ssch04} in designing multi-channel MAC protocols.  This motivates us to understand DISH from a theoretical perspective.

In this paper, we define a metric $p_{co}$ which characterizes the availability of cooperation as {\em the probability of obtaining cooperation} (see Def.~\ref{def:pco} for a more precise definition). We analytically evaluate this metric in multi-channel multi-hop wireless networks with randomly distributed nodes, and verify the analysis via simulations. We also carry out a detailed investigation of $p_{co}$ with three different contexts of DISH: model-based DISH, ideal DISH, and real DISH, in order to obtain meaningful findings.

\subsection{Summary of Contributions}

Our aim in this paper is to understand DISH and the availability of cooperation ($p_{co}$) from an analytical perspective. More specifically,
we provide an analysis which accurately characterizes the availability of cooperation as a function of the underlying network parameters. This analysis reveals {\em what} underlying factors and {\em how} these factors affect cooperation, and can provide guidelines to provisioning the network to increase performance.

Throughput analysis for multi-hop networks is difficult (and still an open problem in general), and it gets even more complicated in a multi-channel context with DISH.  Our approach in this paper is to first look at $p_{co}$ and then correlate it with network performance.  The results indicate that there is a simple relationship between $p_{co}$ and several performance metrics.

The specific findings of this study are:
\begin{enumerate}
  \item The availability of cooperation is high ($p_{co}>0.7$) in typical cases, which suggests that DISH is feasible to use in multi-channel MAC protocols.
  \item The performance degradation due to an increase in node density can be alleviated due to the simultaneously increased availability of cooperation.
  \item The metric $p_{co}$ will increase for larger packet sizes for a given {\em bit} arrival rate, but will decrease for larger packet sizes for a given {\em packet} arrival rate.
  \item Node density and traffic load have opposite effects on $p_{co}$ but node density is the dominating factor. This implies an improved scalability for DISH networks as $p_{co}$ increases with node density.
  \item $p_{co}$ is strongly correlated to network performance and has a near-linear dependence with metrics such as throughput and delay. This may significantly simplify performance analysis for cooperative networks, and suggests that $p_{co}$ be used as an appropriate performance indicator itself.
\end{enumerate}

\section{Related Work}\label{sec:rel}

\ifdefined\JNL
Multi-channel MAC protocols for ad hoc networks can be categorized into multi-radio schemes and single-radio schemes. The first category \cite{wu00,nas99,jain01,mup04,mah06} uses one more radio to monitor channel usage when the other one is engaged in data communication. The second category mainly uses time synchronization to avoid MCC problems, by requiring nodes to negotiate channels in well-known time slots \cite{chen03,so04,tmmac07} or hop among channels according to well-known or predictable sequences \cite{chma00,chat00,ssch04}. The multi-radio schemes are cost inefficient and the synchronous schemes have large overhead and limited scalability.

The core idea of CAM-MAC\cite{tie06cam,tie07tmcsub} is using cooperation to solve MCC problems. Essentially, each node uses its {\em idle neighbors'} radios to gather channel usage information when itself cannot, and then retrieve the information back when needed. This simple idea of distributed information storage and retrieval led to a single-radio and fully asynchronous solution.
\fi

There are three other studies most related to this work. One is CAM-MAC~\cite{tie06cam} which
uses cooperation in the new way that we call DISH in this paper. It is a cooperative multi-channel MAC protocol requiring only a single transceiver and no synchronization. In this protocol, there is a control channel for transmitter-receiver pairs to perform handshakes in order to reserve data channels, while nodes in the neighborhood may send cooperative messages to invalidate the handshake if the selected channel or receiver is busy.
As it is the only work we are aware of that explicitly uses DISH, our system model will use a protocol framework by abstracting the work in \cite{tie06cam}.

The second work, CoopMAC~\cite{comac07jsac}, is also a cooperative MAC protocol which exploits data relaying as many other protocols do, such as \cite{rdcf05infocom,cmac05globecom,cdmac07icc}.
A protocol analysis is provided in the paper and it requires computing the probability that a {\em relay node} is available. This probability is different from $p_{co}$ in that it is determined by the {\em static locations} of nodes, i.e., whether a node exists in a specific region.
The probability is computed via geometric analysis (nodes are assumed to be uniformly distributed). On the other hand, $p_{co}$ is determined not only by static node locations, but also by {\em dynamic node behavior}, e.g, a node must have acquired the specific information at a specific moment. The second main difference between the protocol analysis of CoopMAC and our work is that the problem context of CoopMAC is a {\em wireless LAN} with {\em a single channel}, whereas this paper assumes a multi-hop network with multiple channels.

The last work is by Han et al. \cite{han05} who considered a multi-channel MAC protocol adopting ALOHA on the control channel to reserve data channels.  A queueing-theoretic approach was taken to calculate throughput for the protocol. However, there are some noteworthy limitations. First, only a {\em single-hop} scenario was considered. Second, each node was assumed to be able to communicate on the control channel and a data channel simultaneously. This essentially requires {\em two transceivers} per node, and consequently leads to collision-free data channels, which oversimplifies the problem. Third, a unique virtual queue was assumed to store the packets arriving at {\em all} nodes for the ease of {\em centralized transmission scheduling}, and the precise status of the queue was assumed to be known to the entire network. This assumption is impractical and eventually results in a throughput upper bound. Fourth, the access to the control channel adopts the ALOHA algorithm, rather than the more practical and sophisticated mechanism of CSMA/CA.

\section{System Model}\label{sec:model}

We consider a static and connected ad hoc network in which each node is equipped with a single half-duplex transceiver that can dynamically switch between a set of orthogonal frequency channels but can only use one at a time. One channel is designated as a control channel and the others are designated as data channels. Nodes are placed in a plane area according to a two-dimensional Poisson point process.

We consider a class of multi-channel MAC protocols with their common framework described below. A transmitter-receiver pair uses a \texttt{McRTS}/\texttt{McCTS} handshake on the control channel to set up communication (like 802.11 RTS/CTS) for their subsequent \texttt{DATA}/\texttt{ACK} handshake on a data channel. To elaborate, a transmitter sends a \texttt{McRTS} on the control channel using CSMA/CA, i.e., it sends \texttt{McRTS} after sensing the control channel to be idle for a random period (addressed below) of time. The intended receiver, after successfully receiving \texttt{McRTS}, will send a \texttt{McCTS} and then switch to a data channel (the \texttt{McRTS} informs the receiver of the data channel).
After successfully receiving the \texttt{McCTS}, the transmitter will also switch to its selected data channel, and otherwise it will backoff on the control channel for a random period (addressed below) of time. Hence it is possible that only the receiver switches to the data channel. After switching to a data channel, the transmitter will send a \texttt{DATA} and the receiver will respond with a \texttt{ACK} upon successful reception. Then both of them switch back to the control channel.

In the above we have mentioned two random periods of time. Our model does not specify them but assumes that these periods are designed such that idle intervals on the control channel are well randomized. Specifically, when a node is on the control channel, it sends control messages (an aggregated stream of \texttt{McRTS} and \texttt{McCTS}) according to a Poisson process.

Note that we use \texttt{McRTS}, \texttt{McCTS}, \texttt{DATA} and \texttt{ACK} to refer to different packets (frames) without assuming specific frame formats. Since, logically, they must make a protocol functional, we assume that \texttt{McRTS} carries channel usage information (e.g., ``who will use which channel for how long'') and, for simplicity, \texttt{McCTS} is the same as \texttt{McRTS}.

We assume that, after switching to a data channel, a node will stay on that channel for a period of $T_d$, where $T_d$ is the duration of a successful data channel handshake.
We ignore channel switching delay as it will not fundamentally change our results if it is negligible compared to $T_d$ (the delay is $80\mu s$\cite{ssch04} while $T_d$ is more than $6ms$ for a 1.5KB data packet on a 2Mb/s channel). We also ignore SIFS and propagation delay for the same reason, provided that they are smaller than the transmission time of a control message.

We assume a uniform traffic pattern --- all nodes have the equal data packet arrival rate, and for each data packet to send, a node chooses a receiver equally likely among its neighbors.
We also assume a stable network --- all data packets can be delivered to destinations within finite delay. In addition, packet reception fails if and only if packets collide with each other (i.e., no capture effect), transmission and interference ranges are equal, and the probability that {\em neighboring} nodes simultaneously {\em start} sending control messages is zero (no time synchronization).

We do not assume a specific channel selection strategy; how a node selects data channels will affect how often conflicting channels are selected, but will not affect $p_{co}$. This is because, intuitively, we only care about the availability of cooperation ($p_{co}$) when a multi-channel coordination problem (a precise definition is given in Def.~\ref{def:mccp}), which includes channel conflicting problem, {\em has been} created.

We do not assume a concrete DISH mechanism, i.e., nodes do not physically react upon a multi-channel coordination problem, because analyzing the availability of cooperation does not require the use of this resource. In fact, assuming one of the (numerous possible) DISH mechanisms will lose generality. Nevertheless, we will show in \sref{sec:simu} that, when an ideal or a real DISH mechanism is used, the results do not fundamentally change. This could be an overall effect from contradicting factors which will be explained therein.

The following lists all parameters that are assumed known:
\begin{itemize}
\item $n$: node density. In a multi-hop network, it is the average number of nodes per $R^2$ where $R$ is the transmission range. In a single-hop network, it is the total number of nodes.
\item $\lambda$: the average data packet arrival rate at each node, including retransmissions.
\item $T_d$: the duration of a data channel handshake.
\item $b$: the transmission time of a control message. $b\ll T_d$.
\end{itemize}

\renewcommand{\labelenumi}{(\alph{enumi})}

\section{Analysis} \label{sec:analysis}

\subsection{Problem Formulation and Analysis Outline}\label{sec:formulate}

We first formally define $p_{co}$, which depends on two concepts called the {\em MCC problem} and the {\em cooperative node}.

\begin{defn}[MCC Problem]\label{def:mccp}
A multi-channel coordination (MCC) problem is either a {\em channel conflict} problem or a {\em deaf terminal} problem. A channel conflict problem is created when a node, say $y$, selects a channel to use (transmit or receive packets) but the channel is already in use by a neighboring node, say $x$. A deaf terminal problem is created when a node, say $y$, initiates communication to another node, say $x$, that is however on a different channel. In either case, we say that an MCC problem is created by $x$ and $y$.
\end{defn}

In a protocol that transmits {\tt DATA} without requiring {\tt ACK}, a channel conflict problem does not necessarily indicate an impending data collision. We do not consider such a protocol.

\begin{defn}[Cooperative Node]\label{def:coopnode}
A node that identifies an MCC problem created by two other nodes, say $x$ and $y$, is called a cooperative node with respect to $x$ and $y$.
\end{defn}

See \fref{fig:coopnode} for a visualization based on our system model.

\begin{figure}[tbp]
\centering
\includegraphics[width=0.8\linewidth]{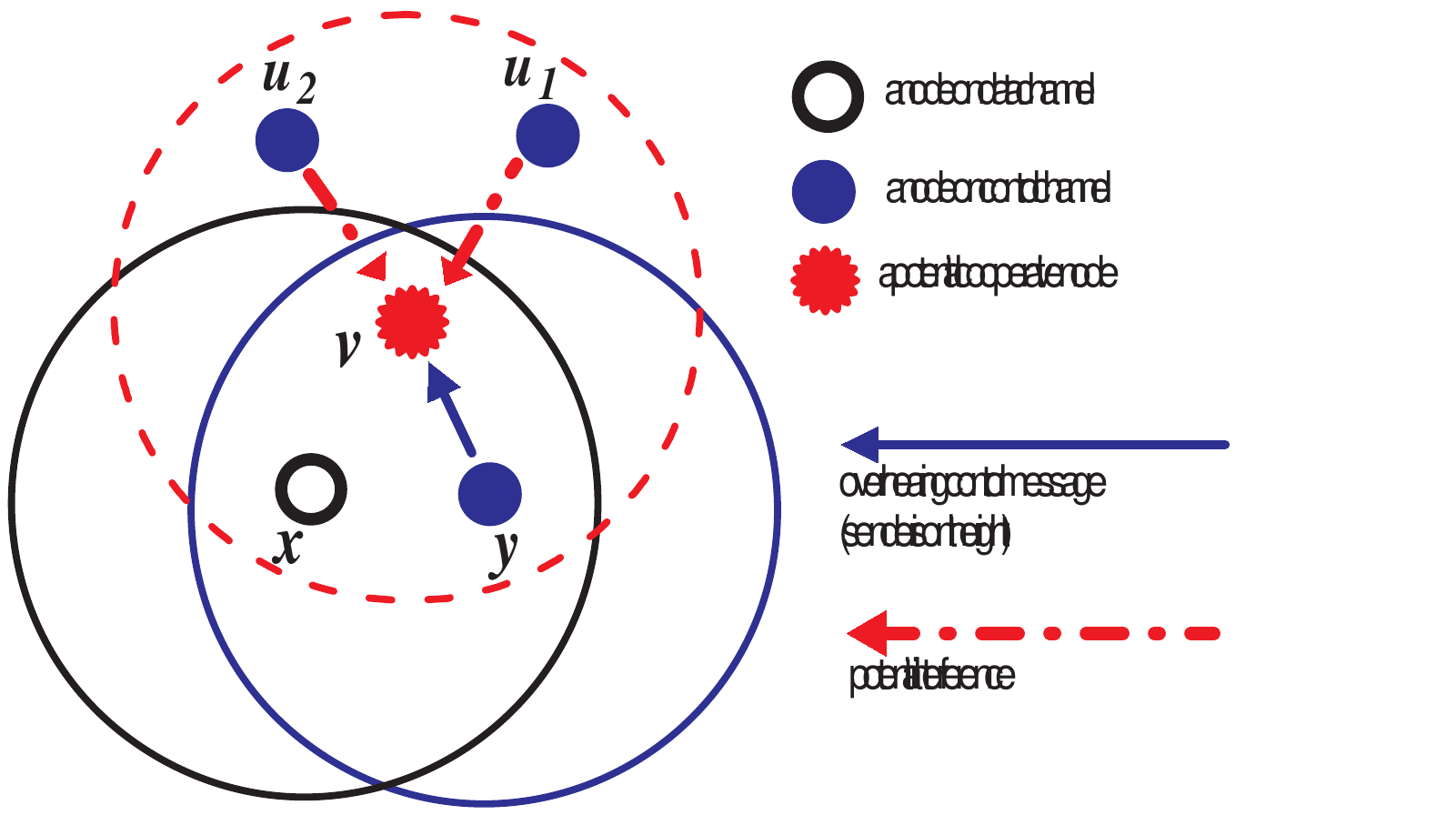}
\caption{Illustration of an MCC problem and a cooperative node. Node $x$ is performing a data channel channel handshake on $CH_x$, and $y$ has just sent a control message during a control channel handshake. If this control message is an {\tt McRTS} addressed to $x$, then a {\em deaf terminal} problem is created. If this control message indicates that $y$ selects $CH_x$ (recall that a control message carries channel usage information), a {\em channel conflict} problem is created. In either case, if a third node $v$ identifies this problem (by overhearing $x$'s and $y$'s control messages successively), it is a cooperative node.}
\label{fig:coopnode}
\end{figure}

\begin{defn}[$p_{co}$]\label{def:pco}
$p_{co}$ is the probability for two arbitrary nodes that create an MCC problem to obtain cooperation, i.e., there is at least one cooperative node with respect to these two nodes.
\end{defn}

Note that, if there are multiple cooperative nodes and a DISH mechanism allows those nodes to send cooperative messages concurrently, then a collision results. However, this collision {\em still indicates an MCC problem} and thus cooperation is still deemed obtained. CAM-MAC\cite{tie06cam} also implements this.

We distinguish the receiving of control messages. A transmitter receiving \texttt{McCTS} from its intended receiver is referred to as {\em intentional receiving}, and the other cases of receiving are referred to as {\em overhearing}, i.e., any node receiving \texttt{McRTS} (hence an intended receiver may also be a cooperative node) or any node other than the intended transmitter receiving \texttt{McCTS}.

Our notation is listed in \tref{tab:notation}. Overall, we will determine $p_{co}$ by following the order of $p_{co}^{xy}(v)\rightarrow p_{co}^{xy}\rightarrow p_{co}$.
\begin{table*}[tbp]
\caption{Notation}\label{tab:notation}
\centering
\begin{tabular} {| c | l | p{5.6in} | }
\hline
\multirow{5}{*}{\rotatebox{90}{\mbox{Probabilities}}}
& $p_{co}^{xy}$ & the probability that at least one cooperative node with respect to $x$ and $y$ exists \\
& $p_{co}^{xy}(v)$ & the probability that node $v$ is a cooperative node with respect to $x$ and $y$\\
& $p_{ctrl}$ & the probability that a node is on the control channel at an arbitrary point in time \\
& $p_{succ}$ & the probability that a control channel handshake (initiated by a \texttt{McRTS}) is successful \\
& $p_{oh}$ & the probability that an arbitrary node successfully overhears a control message \\
\hline
\multirow{6}{*}{\rotatebox{90}{\mbox{Events}}}
& $\A C_v(t)$ & node $v$ is on the control channel at time $t$ \\
& $\A O (v\leftarrow i)$ & node $v$ successfully overhears node $i$'s control message, given that $i$ sends the message \\
& $\A S_v(t_1,t_2)$ & node $v$ is silent (not transmitting) on the control channel during interval $[t_1, t_2]$ \\
& $\A I_v(t_1,t_2)$ & node $v$ does not introduce interference to the control channel during interval $[t_1, t_2]$, i.e., it is on a data channel or is silent on the control channel. \\
& $\Omega_u(t_1,t_2)$ & node $u$, which is on a data channel at $t_1$, switches to the control channel in $[t_1,t_2]$ \\
\hline
\multirow{5}{*}{\rotatebox{90}{\mbox{Others}}}
& $\A N_i, \A N_{ij}, \A N_{v\backslash i}$ & $\A N_i$ is the set of node $i$'s neighbors,
$\A N_{ij}=\A N_i \cap \A N_j$, $\A N_{v\backslash i}=\A N_v\backslash \A N_i\backslash \{i\}$ ($v$'s but not $i$'s neighbors) \\
& $K_{ij}, K_{v\backslash i}$ & $K_{ij}=|\A N_{ij}|$, $K_{v\backslash i}=|\A N_{v\backslash i}|$ \\
& $s_i$ & the time when node $i$ starts to send a control message\\
& $\lambda_c, \lambda_{rts}, \lambda_{cts}$ & the average rates of a node sending control messages, \texttt{McRTS}, and \texttt{McCTS}, respectively, {\em when it is on the control channel}. Clearly, $\lambda_c = \lambda_{rts}+\lambda_{cts}$. \\
\hline
\end{tabular}
\end{table*}

Consider $p_{co}^{xy}(v)$ first. \fref{fig:coopnode} illustrates that node $v$ is cooperative if and only if it successfully overhears $x$'s and $y$'s control messages successively. Hence $\forall v\in \A N_{xy}$,
\begin{align}\label{eq:pcoxyv-def}
p_{co}^{xy}(v) &= \Pr[\A O(v\leftarrow x),\A O(v\leftarrow y)]\notag\\
&= \Pr[\A O (v\leftarrow x)]\cdot \Pr[\A O (v\leftarrow y)|\A O (v\leftarrow x)].
\end{align}

Consider $\A O (v\leftarrow i)$. For $v$ to successfully overhear $i$'s control message which is being sent during interval $[s_i,s_i+b]$, $v$ must be silent on the control channel and not be interfered, i.e.,
\begin{align}\label{eq:poh-def}
\Pr[\A O (v\leftarrow i)] =& \Pr[\A S_v(s_i, s_i+b),
    \bigcap_{u \in \A N_v\backslash \{i\}} \A I_u(s_i,s_i+b)],\notag\\
    &\forall v \in \A N_i.
\end{align}

Now we outline our analysis as below.
\begin{itemize}
\item \sref{sec:basicprop}: solves \eqref{eq:poh-def}, with $p_{ctrl}$ and $\lambda_c$ introduced.
\item \sref{sec:balance}: solves $p_{ctrl}$ and $\lambda_c$.
\item \sref{sec:revisit}: solves \eqref{eq:pcoxyv-def} and then the target metric $p_{co}$.
\item \sref{sec:sghop}: special-case study in single-hop networks.
\end{itemize}

\subsection{Solving Equation~\eqref{eq:poh-def}} \label{sec:basicprop}

\begin{prop} \label{prop:switch}
If node $u$ is on a data channel at $t_1$, then the probability that $u$ does not introduce interference to the control channel during $[t_1,t_2]$, where $t_2-t_1=\Delta t<T_d$, is given by
\begin{align*}
 \Pr[\A I_u(t_1,t_2)| \overline{\A C_u(t_1)}]
=1-\frac{\Delta t}{T_d} + \frac{1- e^{-\lambda_c \Delta t}}{\lambda_c T_d}.
\end{align*}
\end{prop}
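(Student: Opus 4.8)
The plan is to condition on the \emph{residual time} $R$, defined as the time from $t_1$ until $u$ next switches from the data channel back to the control channel. Since the model prescribes that a node occupies a data channel for a fixed deterministic duration $T_d$ on each visit, a renewal/inspection argument should show that, in steady state, an arbitrary instant $t_1$ at which $u$ is found on a data channel falls uniformly within its current $T_d$-length sojourn: over a long horizon every data-channel visit covers an interval of the same length $T_d$, so conditioned on $\overline{\A C_u(t_1)}$ the elapsed time since entry is uniform on $[0,T_d]$, and hence so is $R$, with density $1/T_d$ on $[0,T_d]$. Establishing this uniformity is the first thing I would do, as it is the crux of the whole computation.

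Next I would split on whether $u$ returns to the control channel before $t_2$. If $R\ge \Delta t$, then $u$ stays on the data channel throughout $[t_1,t_2]$ and introduces no interference, so $\A I_u(t_1,t_2)$ holds with certainty; this case contributes $\Pr[R\ge \Delta t]=1-\frac{\Delta t}{T_d}$. If instead $R=r<\Delta t$, then $u$ rejoins the control channel at $t_1+r$, and $\A I_u(t_1,t_2)$ holds exactly when $u$ remains silent over the leftover window $[t_1+r,\,t_2]$ of length $\Delta t-r$. By the model assumption that a node on the control channel emits control messages as a Poisson process of rate $\lambda_c$ (with its clock starting fresh upon switching, and independently of $R$), the probability of silence over a window of length $w$ is $e^{-\lambda_c w}$.

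Integrating the second case against the density of $R$ gives the contribution $\int_0^{\Delta t} e^{-\lambda_c(\Delta t-r)}\,\frac{1}{T_d}\,dr$, which under the substitution $w=\Delta t-r$ reduces to $\frac{1-e^{-\lambda_c\Delta t}}{\lambda_c T_d}$. Adding this to the first-case term $1-\frac{\Delta t}{T_d}$ yields precisely the claimed expression. The main obstacle is the opening step: carefully justifying that $R$ is uniform on $[0,T_d]$ (rather than, say, exponential or degenerate) under the stationarity implicit in the model, together with the independence of the post-switch Poisson message stream from $R$. Once those two facts are pinned down, the case split and the single integral are routine.
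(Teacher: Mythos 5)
Your proposal is correct and follows essentially the same route as the paper: both split on whether $u$ returns to the control channel before $t_2$ (the paper's event $\Omega_u(t_1,t_2)$), both rest on the switch-back time being uniform on an interval of length $T_d$ (you justify this via a renewal/inspection argument, the paper simply notes the start of the data handshake is unknown), and both evaluate the same exponential integral over the leftover silence window to get $\frac{1-e^{-\lambda_c \Delta t}}{\lambda_c T_d}$.
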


\begin{proof}
By the total probability theorem,
\begin{align*}
l.h.s.=&\Pr[\Omega_u(t_1,t_2)]\times\Pr[\A I_u(t_1,t_2) |\Omega_u(t_1,t_2)]\notag\\
 &+\Pr[\overline{\Omega_u(t_1,t_2)}]\times 1.
\end{align*}

Let $t_{sw}$ be the time when node $u$ switches to the control channel (see \fref{fig:switch}). It is uniformly distributed in $[t_1, t_1+T_d]$ because the time when $u$ started its data channel handshake is unknown, and hence
\begin{align}\label{eq:switch}
 \Pr[\Omega_u(t_1,t_2)] = \frac{\Delta t}{T_d}.
\end{align}

\begin{figure}[tbp]
\centering
  \includegraphics[trim=0 0 0 3mm,clip,width=0.75\linewidth]{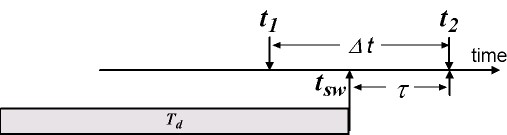}
  \caption{A node switches to the control channel after data channel handshaking.}
  \label{fig:switch}
\vspace{-3mm}\end{figure}

Since control channel traffic is Poisson with rate $\lambda_c$,
\begin{align*}
\Pr[\A I_u(t_1, t_2) | \Omega_u(t_1,t_2)]
  =\Pr[\A S_u(t_{sw}, t_2) | \Omega_u(t_1,t_2)]
  =\bb E [e^{-\lambda_c \tau}]
\end{align*}
where $\tau=t_2 - t_{sw}$ is uniformly distributed in $[0,\Delta t]$ by the same argument leading to \eqref{eq:switch}. Hence
\begin{align*}
\mathbb{E} [e^{-\lambda_c \tau}]
    = \int_0^{\Delta t} e^{-\lambda_c \tau_0} \frac{1}{\Delta t} d\tau_0
    = \frac{1- e^{- \lambda_c \Delta t}}{\lambda_c \Delta t},
\end{align*}
and then by substitution the proposition is proven.
\end{proof}

\begin{prop}\label{prop:notintfoh}
If node $v$ is overhearing a control message from node $i$ during $[s_i, s_i+b]$, then the probability that a node $u \in \A N_v$ does not interfere with $v$ is given by
\begin{align*}
\Pr[\A I_u(s_i, s_i+b)] = \left\{
  \begin{array}{ll}
    1, & u \in \A N_{vi}; \\
    p_{ni\text{-}oh}, & u \in \A N_{v\backslash i}.
  \end{array}
\right.
\end{align*}
where
\[ p_{ni\text{-}oh} = \ p_{ctrl} \cdot e^{-2 \lambda_c b} +(1-p_{ctrl})
    \cdot (1-\frac{2b}{T_d} + \frac{1- e^{-2 \lambda_c b}}{\lambda_c T_d}). \]
\end{prop}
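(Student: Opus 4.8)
The plan is to prove the two cases separately according to whether the potential interferer $u$ can hear node $i$, since this is exactly what determines whether carrier sensing protects $v$'s reception. The common starting point for both cases is the observation that, because propagation delay is ignored and every control message occupies exactly $b$ units of time, a transmission that $u$ begins at time $t_u$ collides with $v$'s reception of $i$'s message on $[s_i,s_i+b]$ if and only if $[t_u,t_u+b]$ overlaps $[s_i,s_i+b]$, i.e.\ if and only if $u$ starts transmitting in the window $(s_i-b,s_i+b)$ of length $2b$. Thus computing $\Pr[\A I_u(s_i,s_i+b)]$ reduces to computing the probability that $u$ initiates no control message in this $2b$-long danger window.

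For $u\in\A N_{vi}$, I would argue that this probability is exactly $1$ using the CSMA/CA assumption. Since $v$ is by hypothesis overhearing $i$, node $i$ must have begun transmitting at $s_i$, which under CSMA/CA requires the control channel to have been sensed idle at $s_i$; because $u$ is a neighbor of $i$, this rules out any ongoing transmission of $u$ at $s_i$ (otherwise $i$ would have deferred), so $u$ did not start in $(s_i-b,s_i)$. During $[s_i,s_i+b]$ node $u$ hears $i$'s transmission and therefore senses the channel busy, so it does not initiate a new message, and the no-simultaneous-start assumption excludes a start exactly at $s_i$. Hence $u$ has no transmission overlapping $[s_i,s_i+b]$, and the same conclusion holds trivially if $u$ happens to be on a data channel, giving $\Pr[\A I_u(s_i,s_i+b)]=1$.

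For $u\in\A N_{v\backslash i}$, node $u$ is hidden from $i$ and so is not silenced by carrier sensing; here I would condition on $u$'s channel state at the start $s_i-b$ of the danger window via the total probability theorem, using that $i$'s transmission epoch is independent of $u$'s state so that the marginal probability $p_{ctrl}$ applies. With probability $p_{ctrl}$, $u$ is on the control channel, and since its control messages form a Poisson process of rate $\lambda_c$, the probability of no start in a window of length $2b$ is $e^{-2\lambda_c b}$. With probability $1-p_{ctrl}$, $u$ is on a data channel at $s_i-b$, and the probability that it neither switches within the window nor, having switched, transmits before the window ends is precisely the quantity supplied by \pref{prop:switch} evaluated at $\Delta t=2b$, namely $1-\frac{2b}{T_d}+\frac{1-e^{-2\lambda_c b}}{\lambda_c T_d}$. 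Summing the two weighted contributions yields $p_{ni\text{-}oh}$.

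The step I expect to be the main obstacle is the clean identification and justification of the $2b$ danger window together with the choice of reference time for the conditioning. One must ensure that the quantity controlled by \pref{prop:switch}---the probability that $u$ starts no control message on an interval of length $\Delta t$ given it is on a data channel at the interval's left endpoint---matches exactly the no-start event on $(s_i-b,s_i+b)$, which forces both the window length $2b$ and the conditioning time $s_i-b$; conditioning instead at $s_i$ would mishandle transmissions that $u$ begins in $(s_i-b,s_i)$ and are still active when $i$ starts. A secondary subtlety is justifying that $u$ remains on the control channel throughout the short window in the first sub-case, which is reasonable because $b\ll T_d$ makes a channel switch within the window negligibly likely.
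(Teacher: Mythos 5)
Your proposal is correct and follows essentially the same route as the paper: you identify the vulnerable window $[s_i-b,s_i+b]$ of length $2b$, dispose of $u\in\A N_{vi}$ via carrier sensing, and for $u\in\A N_{v\backslash i}$ apply the total probability theorem conditioned on $\A C_u(s_i-b)$, combining the Poisson no-arrival probability $e^{-2\lambda_c b}$ with \pref{prop:switch} evaluated at $\Delta t=2b$. Your handling of the common-neighbor case is in fact slightly more careful than the paper's, since you explicitly rule out transmissions of $u$ begun in $(s_i-b,s_i)$ by noting that $i$'s own carrier sensing at $s_i$ would otherwise have deferred $i$'s transmission.
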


\begin{prop}\label{prop:notintfcts}
If node $i$ (transmitter) is intentionally receiving \texttt{McCTS} from node $j$ (receiver) during $[s_j, s_j+b]$, then the probability that a node $u \in \A N_i$ does not interfere with $i$ is given by
\begin{align*}
\Pr[\A I_u(s_j, s_j+b)] = \left\{
  \begin{array}{ll}
    1, & u \in \A N_{ij}; \\
    p_{ni\text{-}cts}, & u \in \A N_{i\backslash j}.
  \end{array}
\right.
\end{align*}
where $p_{ni\text{-}cts} =$
\[ (1-p_{ctrl}) [1-\frac{b}{T_d}
       (1+\frac{b}{T_d} - \frac{1- e^{-\lambda_c b}}{\lambda_c T_d} - e^{-\lambda_c b})]
    + p_{ctrl}. \]
\end{prop}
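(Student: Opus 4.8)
The plan is to mirror the case structure of \pref{prop:notintfoh}, conditioning first on whether $u$ is a common neighbor of $i$ and $j$ and then, for the non-common neighbors, on whether $u$ currently sits on the control or a data channel. The essential new ingredient relative to the overhearing case is that the reception window $[s_j,s_j+b]$ is immediately preceded by the \texttt{McRTS} that $i$ itself transmitted during $[s_j-b,s_j]$, so that \texttt{McRTS} shields the first half of the $2b$-long vulnerable window in an asymmetric way.

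First, for $u\in\A N_{ij}$ I would show $\Pr[\A I_u(s_j,s_j+b)]=1$ directly. A message from $u$ can corrupt $i$'s reception only if it overlaps $[s_j,s_j+b]$, i.e.\ if $u$ begins transmitting either within $[s_j,s_j+b)$ or within the preceding $[s_j-b,s_j)=[s_i,s_i+b)$; any earlier message ends before $s_j$. The first is excluded because $u\in\A N_j$ senses $j$'s \texttt{McCTS} and defers, and the second because $u\in\A N_i$ senses $i$'s \texttt{McRTS} and defers (invoking the model assumption that neighbors never start transmitting simultaneously). This holds whether $u$ is already on the control channel or switches in from a data channel, so $u$ is silent throughout and the probability is $1$.

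Next, for $u\in\A N_{i\backslash j}$ I would apply the total probability theorem over $u$'s channel state with $\Pr[\text{control}]=p_{ctrl}$. If $u$ is on the control channel it has overheard $i$'s \texttt{McRTS}; since the \texttt{McRTS} carries the handshake/reservation information, $u$ extends its deferral to cover the ensuing \texttt{McCTS} and stays silent over $[s_j,s_j+b]$, contributing the term $p_{ctrl}\cdot 1$. The substance is the data-channel case (probability $1-p_{ctrl}$), where $u$ has missed the \texttt{McRTS}. Here I would condition on the switch-back instant $t_{sw}$, which by \pref{prop:switch} is uniform with switch probability $\Delta t/T_d$ over any sub-interval of length $\Delta t$, and split the window at $s_j$. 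If $u$ switches in during the \texttt{McRTS} half $[s_j-b,s_j]$ it senses the busy channel (being $i$'s neighbor), defers to $s_j$, and may then initiate a transmission anywhere in the full length-$b$ window, giving a silence probability $e^{-\lambda_c b}$. If instead it switches in during the \texttt{McCTS} half it cannot hear $j$, senses idle, and can interfere only over the residual portion, exactly the Poisson/residual computation of \pref{prop:switch}. Assembling these with the switch probabilities and the Poisson silence probabilities over the relevant windows should yield $p_{ni\text{-}cts}$.

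I expect the delicate step to be the bookkeeping of this data-channel case: tracking how far $u$ defers after switching into the carrier-sensed \texttt{McRTS} half versus the un-sensed \texttt{McCTS} half, and combining the resulting full-window and residual-window transmission opportunities so that the two halves enter the final expression asymmetrically. In particular, the stated formula carries a factor $(b/T_d)^2$, which suggests that the \texttt{McCTS} half is handled as a nested instance of \pref{prop:switch} (treating $u$ as freshly ``on a data channel at $s_j$'') rather than as a single residual-window term; reconciling these two treatments and getting the boundary condition at $s_j$ right — the deferral ending precisely as the un-hearable \texttt{McCTS} begins — is where an error is most likely to creep in, and is exactly what makes the result differ from the symmetric $2b$-window expression of \pref{prop:notintfoh}.
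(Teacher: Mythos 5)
Your proposal is correct and follows essentially the same route as the paper's proof: the $\A N_{ij}$ case is settled by carrier sensing over both halves of the window, and the $\A N_{i\backslash j}$ case by total probability on $u$'s state at $s_j-b$, with the control-channel branch contributing $p_{ctrl}\cdot 1$ (the \texttt{McRTS} is overheard, so $u$ defers through the \texttt{McCTS}) and the data-channel branch split into a switch-in during the shielded \texttt{McRTS} half (suppressed until $s_j$, then silence with probability $e^{-\lambda_c b}$ over the full window), during the \texttt{McCTS} half, or after $s_j+b$ (probability $1$). Your reverse-engineering of the $(b/T_d)^2$ factor is also exactly right: the paper weights the \texttt{McCTS}-half case by $\Pr[\Omega_u(s_j,s_j+b)]=b/T_d$ and multiplies by the full expression of \pref{prop:switch}, namely $1-\frac{b}{T_d}+\frac{1-e^{-\lambda_c b}}{\lambda_c T_d}$ --- the nested treatment you inferred, rather than the single residual-window term.
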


Then we can solve \eqref{eq:poh-def}, based on \pref{prop:notintfoh}, to be
\begin{align}\label{eq:poh-vi}
\Pr[\A O (v\leftarrow i)] \approx p_{ctrl} \; p_{ni\text{-}oh}^{K_{v\backslash i}}.
\end{align}

See appendix for the proof of \pref{prop:notintfoh}, \pref{prop:notintfcts} and Eq.~\eqref{eq:poh-vi}.

\subsection{Solving $p_{ctrl}$ and $\lambda_c$}\label{sec:balance}

For $p_{ctrl}$, consider a sufficiently long period $T_0$. On the one hand, the number of arrival data packets at each node is $\lambda T_0$. On the other hand, each node spends total time of $(1-p_{ctrl})T_0$ on data channels, a factor $\eta$ of which is used for sending arrival data packets. Since the network is stable (incoming traffic is equal to outgoing traffic), we establish a balanced equation:
\begin{align*}
    \lambda T_0 T_d = \eta\; (1-p_{ctrl}) T_0.
\end{align*}
To determine $\eta$, noticing that a node switches to data channels either as a transmitter (with an average rate of $\lambda$) or as a receiver (with an average rate of $\lambda_{cts}$), we have
$\eta =\lambda/(\lambda+\lambda_{cts})$. Substituting this into the above yields
\begin{align}\label{eq:pctrl}
p_{ctrl} = 1 - (\lambda+\lambda_{cts}) T_d.
\end{align}

For $\lambda_c$ (together with $\lambda_{cts}$), we need two lemmas.
\begin{lem} \label{lem:epk}
For a Poisson random variable $K$ with mean value $\overline{K}$, and $0<p<1$,
\begin{equation*}
    \mathbb{E} [p^K] = e^{-(1-p)\overline{K}}.
\end{equation*}
\end{lem}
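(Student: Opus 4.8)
The plan is to compute $\mathbb{E}[p^K]$ directly from the probability mass function of a Poisson random variable. Recall that if $K$ is Poisson with mean $\overline{K}$, then $\Pr[K=k] = e^{-\overline{K}}\,\overline{K}^k/k!$ for $k=0,1,2,\ldots$. So I would write the expectation as the series
\begin{align*}
\mathbb{E}[p^K] = \sum_{k=0}^{\infty} p^k \cdot e^{-\overline{K}} \frac{\overline{K}^k}{k!}
= e^{-\overline{K}} \sum_{k=0}^{\infty} \frac{(p\,\overline{K})^k}{k!}.
\end{align*}

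The key step is to recognize the remaining sum as the Taylor expansion of the exponential function, namely $\sum_{k=0}^{\infty} (p\overline{K})^k/k! = e^{p\overline{K}}$. This series converges for every real argument, so no restriction beyond $0<p<1$ is needed for the manipulation (indeed the identity holds for any $p$, but the stated range is what the paper will use downstream). Substituting gives
\begin{align*}
\mathbb{E}[p^K] = e^{-\overline{K}} \cdot e^{p\overline{K}} = e^{-(1-p)\overline{K}},
\end{align*}
which is exactly the claimed formula.

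There is essentially no hard part here: the result is the probability generating function of the Poisson distribution evaluated at the point $p$, a standard fact. The only point deserving a word of care is the interchange of expectation with the infinite sum, i.e., justifying that the series converges absolutely so that the term-by-term summation is legitimate; since all terms are nonnegative for $0<p<1$, Tonelli's theorem (or simple absolute convergence of the exponential series) makes this immediate. I would therefore present the argument as a two-line computation, noting the generating-function interpretation in passing.
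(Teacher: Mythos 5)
Your proof is correct and follows exactly the same route as the paper's: expand $\mathbb{E}[p^K]$ via the Poisson probability mass function, recognize the exponential series $\sum_{k\ge 0}(p\overline{K})^k/k! = e^{p\overline{K}}$, and combine the exponentials. The remarks on absolute convergence and the probability-generating-function interpretation are fine additions but the computation itself is identical to the paper's one-line proof.
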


\begin{proof}
\begin{align*}
    \mathbb{E} [p^K] = \sum_0^\infty p^k \Pr(K=k)
        = e^{-\overline{K}} \sum_0^\infty \frac{(p\overline{K})^k}{k!}
        = e^{-(1-p)\overline{K}}.
\end{align*}
\end{proof}

\vspace{-5mm}
\begin{lem} \label{lem:avgarea}
For three random distributed nodes $v$, $i$ and $j$,
\begin{enumerate}
\item $\mathbb{E}[K_{v\backslash i}|v\in \A N_i] \approx 1.30 n$.
\item $\mathbb{E}[K_{v\backslash i}|v\in \A N_{ij}] \approx 1.19 n$.
\item $\bb{E}[K_{ij}] \approx 1.84 n$.
\end{enumerate}
\end{lem}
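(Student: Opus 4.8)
The plan is to reduce each of the three expectations to the computation of an expected region area, using the defining property of the Poisson field. Writing $\rho = n/R^2$ for the intensity (recall $n$ is the mean number of nodes per $R^2$), the key fact is that conditioning on the locations of finitely many named nodes leaves the remaining nodes Poisson with intensity $\rho$, so the expected number of them inside any fixed region equals $\rho$ times that region's area. Thus each $\bb E[\cdot]$ becomes $\rho$ times the expected area of the relevant set, averaged over the random inter-node distances. I would first record two geometric facts: two disks of radius $R$ whose centers are a distance $d$ apart overlap in a lens of area
\[
A_\cap(d) = 2R^2\cos^{-1}\!\Big(\frac{d}{2R}\Big) - \frac{d}{2}\sqrt{4R^2-d^2},\qquad 0\le d\le R,
\]
and a node known to be a neighbor of a given node is uniformly distributed over that node's disk, so the relevant pairwise distance has density $f(d)=2d/R^2$ on $[0,R]$.

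Parts (c) and (a) then follow from the same distance law. For (c), $K_{ij}$ counts the Poisson points in the common-neighbor lens $\A N_i\cap\A N_j$, so $\bb E[K_{ij}\mid d]=\rho\,A_\cap(d)$ and
\[
\bb E[K_{ij}] = \frac{n}{R^2}\int_0^R A_\cap(d)\,\frac{2d}{R^2}\,dd \approx 1.84\,n,
\]
the constant being obtained by numerically evaluating the non-elementary integral. For (a), $K_{v\backslash i}$ counts $v$'s neighbors lying outside $i$'s disk, a region of area $\pi R^2 - A_\cap(d_{vi})$, with $d_{vi}$ again of density $2d/R^2$ under the conditioning $v\in\A N_i$. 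Hence this case is exactly the complement of (c): $\bb E[K_{v\backslash i}\mid v\in\A N_i]=n(\pi-1.84)\approx 1.30\,n$.

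Part (b) is the genuinely new computation and the step I expect to be the main obstacle. The area to be averaged is still $\pi R^2-A_\cap(d_{vi})$, but $v$ is now conditioned to lie in the lens $\A N_{ij}$ rather than in a full disk, so $d_{vi}$ is no longer $2d/R^2$-distributed and its law depends on $d_{ij}$. I would therefore set up a nested average: place $i$ and $j$ at distance $d_{ij}$ (density $2d/R^2$), draw $v$ uniformly over their lens, and integrate $\pi R^2-A_\cap(d_{vi})$ over $v$ before averaging over $d_{ij}$. Intuitively, forcing $v$ to be near $j$ as well makes $d_{vi}$ stochastically smaller than in (a), so $A_\cap(d_{vi})$ is larger and the exterior area shrinks, consistent with the stated $\approx 1.19\,n<1.30\,n$. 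The only non-routine difficulty is that this double geometric integral has no closed form and must be evaluated numerically; throughout I would treat the network as effectively unbounded so that edge truncation of neighborhoods is negligible, which (together with the numerical integration) is the source of the ``$\approx$'' in all three statements.
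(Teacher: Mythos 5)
Your proposal is correct, and its overall machinery (Poisson intensity times expected area, the lens-area formula, the neighbor-distance pdf $2r/R^2$, numerical evaluation) is exactly the paper's. For (a) and (c) the two proofs coincide up to ordering: the paper integrates the complementary area $A_c(r)=\pi R^2-A_s(r)$ against $2r/R^2$ to get (a) and obtains (c) by complementarity, whereas you compute the lens integral (c) first and get (a) as $n\pi-1.84n$; these are the same computation. The genuine difference is in (b), which is the bulk of the paper's proof. You propose the direct route: fix $\|ij\|=r$ with density $2r/R^2$, take $v$ uniform on the lens $\A N_{ij}$, and numerically average $\pi R^2-A_s(\|vi\|)$; this is sound, but note that the inner law of $r_2=\|vi\|$ is piecewise --- the full circle of radius $r_2$ about $i$ lies in the lens when $r_2\le R-r$, and only an arc of half-angle $\theta=\arccos\bigl[(r_2^2+r^2-R^2)/(2r_2 r)\bigr]$ does when $r_2>R-r$. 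The paper sidesteps this piecewise density: it writes the total-expectation identity
\begin{align*}
\bb E[A_{v\backslash i}\mid v\in\A N_i]=p_1\,\bb E[A_{v\backslash i}\mid v\in\A N_{ij}]+(1-p_1)\,\bb E[A_{v\backslash i}\mid v\in\A N_{i\backslash j}],\qquad p_1=\frac{A_s(r)}{\pi R^2},
\end{align*}
computes the lune term $\A N_{i\backslash j}$ (whose distance pdf is a single formula supported on $[R-r,R]$), and solves for the lens term by recycling the $1.30R^2$ from (a). By the law of total expectation your direct integral and the paper's indirect one are algebraically identical, so both yield $\approx 1.19n$; the paper's detour buys a cleaner integrand, while your route is more transparent (e.g., it makes immediate the stochastic-domination intuition you state for why (b) is smaller than (a)). Finally, you and the paper make the same implicit choice of averaging over $\|ij\|$ with the unconditioned neighbor pdf $2r/R^2$ rather than the size-biased law induced by conditioning on the existence of $v\in\A N_{ij}$, so there is no discrepancy on that point either.
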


\begin{proof}
Let $A_s(\gamma)$ be the intersection area of two circles with a distance of $\gamma$ between their centers, and $\gamma<R$, where $R$ is the circles' radius. It can be derived from \cite{mathwd-cirseg} that
\begin{align*}
    A_s(\gamma) = 2 R^2 \arccos\frac{\gamma}{2R} -
        \gamma\sqrt{R^2-\frac{\gamma^2}{4}}.
\end{align*}
Let $A_c(\gamma)$ be the complementary area of $A_s(\gamma)$, i.e., $A_c(\gamma) = \pi R^2 - A_s(\gamma)$,
and let $A_{ij}$ and $A_{v\backslash i}$ be the areas where $\A N_{ij}$ and $\A N_{v\backslash i}$ are located, respectively.

(a) See \fref{fig:distpdf-all}. Letting $\gamma=||vi||$ where $v\in \A N_i$, and $f(r)$ be its probability density function (pdf), we have $f(r) dr = 2 \pi r dr/(\pi R^2)$, which gives $f(r) = 2 r/R^2$. Thus
\begin{align*}
    \mathbb{E}[A_{v\backslash i}|v\in \A N_i]
    = \int_0^R A_c(r) f(r) d r \notag
    \approx 1.30 R^2,
\end{align*}
and hence $\mathbb{E}[K_{v\backslash i}|v\in \A N_i] \approx n\cdot 1.30 R^2/R^2 = 1.30 n$.

\begin{figure}[tbp]
\centering
\subfigure[$v\in \A N_i$.]
    {\includegraphics[width=0.32\linewidth]{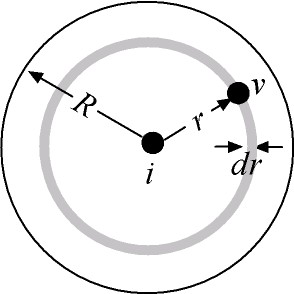}
    \label{fig:distpdf-all}} \hfil
\subfigure[$v\in\A N_{i\backslash j}$.]
    {\includegraphics[width=0.55\linewidth]{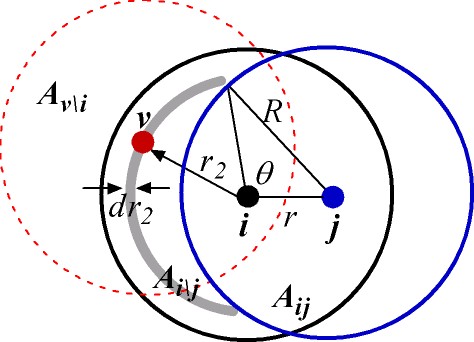}
    \label{fig:distpdf-moon}}
\caption{Deriving the pdf of distance $||vi||$.}
\label{fig:distpdf}
\vspace{-5mm}\end{figure}

(b) Let $\gamma_1=||vi||$ where $v\in \A N_{ij}$, and $f_1(r_1)$ be its pdf.
To solve $f_1(r_1)$, we consider $v\in \A N_{i \backslash j}$ instead (see \fref{fig:distpdf-moon}):
\begin{align}\label{eq:condexp}
&\because\ \mathbb{E}[A_{v\backslash i}|v\in \A N_i]
  = p_1 \cdot \mathbb{E}[A_{v\backslash i}|v\in \A N_{ij}]\notag\\
&\hspace{3.1cm}+(1-p_1) \cdot \mathbb{E}[A_{v\backslash i} | v\in \A N_{i \backslash j}]\notag\\
&\hspace{5mm}\text{where } p_1\triangleq \Pr[v\in \A N_{ij} | v\in\A N_i] = \frac{A_s(r)}{\pi R^2},\notag\\
&\therefore\ \mathbb{E}[A_{v\backslash i} | v\in \A N_{ij}]
    = p_1^{-1}\mathbb{E}[A_{v\backslash i}|v\in \A N_i]\notag\\
&\hspace{3.1cm}- (p_1^{-1}-1)\cdot \mathbb{E}[A_{v\backslash i} | v\in \A N_{i \backslash j}].
\end{align}
To determine $\mathbb{E}[A_{v\backslash i} | v\in \A N_{i\backslash j}]$, let $\gamma_2=||vi||$ where $v\in \A N_{i\backslash j}$, and $f_2(r_2)$ be its pdf. It is determined by
\begin{align*}
f_2(r_2) dr_2 = \frac{2(\pi - \theta)r_2 dr_2}{A_{i\backslash j}} \text{ and }
\cos\theta = \frac{r_2^2 + r^2 - R^2}{2r_2 r}.
\end{align*}
Therefore
\begin{align*}
 &\mathbb{E}[A_{v\backslash i} | v\in \A N_{i\backslash j}]
   = \int_{R-r}^R A_c(r_2) f_2(r_2) dr_2\\
=&\int_{R-r}^R \frac{2 r_2 A_c(r_2)}{A_c(r)}
    (\pi - \arccos \frac{r_2^2 + r^2 - R^2}{2r_2 r} ) dr_2.
\end{align*}
Substituting this and $\mathbb{E}[A_{v\backslash i}|v\in \A N_i]\approx 1.30 R^2$ (by case (a)) into \eqref{eq:condexp} solves $\bb{E}[A_{v\backslash i} | v\in \A N_{ij}]$, which we denote by $M(r)$.
Then we have
\[ \mathbb{E}[K_{v\backslash i}|v\in \A N_{ij}] = \frac{n}{R^2} \int_0^R M(r)f(r)dr \approx 1.19 n. \]

(c) Proven by noticing $A_{ij}$ is complementary to the area corresponding to case (a).
\end{proof}

These lemmas enable us to prove (see appendix) that
\begin{align}\label{eq:poh_psucc}
p_{oh} \approx p_{ctrl}\; \exp[-1.30 n (1-p_{ni\text{-}oh})],\notag\\
p_{succ} \approx p_{oh} \exp[-1.30 n(1-p_{ni\text{-}cts})].
\end{align}

Now we solve $\lambda_c$ (together with $\lambda_{cts}$). From the perspective of a transmitter, the average number of successful control channel handshakes that it initiates per second is $p_{ctrl} \lambda_{rts} p_{succ}$. Since each successful control channel handshake leads to transmitting one data packet, we have $p_{ctrl} \lambda_{rts} p_{succ} = \lambda$.

From the perspective of a receiver, it sends a \texttt{McCTS} when it successfully receives (overhears) a \texttt{McRTS} addressed to it,
and hence $\lambda_{cts} = \lambda_{rts}\, p_{oh}$.  Then combining these with $\lambda_c = \lambda_{rts}+\lambda_{cts}$ yields
\begin{align}\label{eq:lmdc}
    \lambda_c = \frac{\lambda (1+p_{oh})}{p_{ctrl}\, p_{succ}} \text{ and }
    \lambda_{cts} = \frac{\lambda\ p_{oh}}{p_{ctrl}\, p_{succ}}
\end{align}
where $p_{oh}$ and $p_{succ}$ are given in \eqref{eq:poh_psucc}.

\subsection{Solving Equation \eqref{eq:pcoxyv-def} and Target Metric $p_{co}$}\label{sec:revisit}

Based on the proof of \eqref{eq:poh-vi}, it can be derived that
\begin{align}\label{eq:poh-vy}
\Pr[\A O (v\leftarrow y)|\A O (v\leftarrow x)] \approx p_{ctrl}^{\star} \; p_{ni\text{-}oh}^{K_{v\backslash y}},
\end{align}
\[ \text{where \ \ } p_{ctrl}^{\star} \triangleq \Pr[\A C_v(s_y)|\A O (v\leftarrow x)]. \]
Note that $p_{ctrl}^{\star} \neq p_{ctrl}$, because $s_y$ is not an {\em arbitrary} time for $v$ due to the effect form $\A O (v\leftarrow x)$. The reason is that $\A O (v\leftarrow x)$ implies $\A C_v(s_x)$, and thus for $\A C_v(s_y)$ to happen, $v$ must stay {\em continuously} on the control channel during $[s_x,s_y]$ (otherwise, a switching will lead to $v$ staying on the data channel for $T_d$, but $s_x+T_d>s_y$ since $x$'s data communication is still ongoing at $s_y$, and hence $\A C_v(s_y)$ can never happen).

It can be proven (see appendix for the proof) that
\begin{align}\label{eq:pctrlstar}
p_{ctrl}^{\star} = \frac{(w\lambda_c - \frac{1-w}{T_d})\; g(\lambda_c+\lambda_w) +
    \frac{1-w}{T_d}\; g(\lambda_w)} {1-w+ (w\lambda_c - \frac{1-w}{T_d})\; g(\lambda_c)}
\end{align}
where
\[ g(x)=\frac{1- e^{-x T_d}}{x},\;\; w = \frac{p_{ctrl}-p_{oh}}{1-p_{oh}}, \text{ and}\]
\[ \lambda_w=\lambda_{rts} p_{succ}+\lambda_{cts}. \]

Combining \eqref{eq:poh-vi} and \eqref{eq:poh-vy} reduces \eqref{eq:pcoxyv} to
\begin{align}\label{eq:pcoxyv}
p_{co}^{xy}(v) \approx p_{ctrl} \; p_{ctrl}^{\star}
  \; p_{ni\text{-}oh}^{K_{v\backslash x}+K_{v\backslash y}},\; \forall  v\in\A N_{xy}.
\end{align}

Let $p_{co}^{xy}(\star)$ be the average of $p_{co}^{xy}(v)$ over all $v\in\A N_{xy}$, i.e., $p_{co}^{xy}(\star)$ is the probability that an arbitrary node in $\A N_{xy}$ is cooperative with respect to $x$ and $y$, Using \lref{lem:avgarea}-(b),
\begin{align}\label{eq:pcostar}
p_{co}^{xy}(\star) \approx p_{ctrl} \; p_{ctrl}^{\star} \; \exp[-2.38 n (1-p_{ni\text{-}oh})].
\end{align}

By the definition of $p_{co}^{xy}$ in \tref{tab:notation},
\begin{align}\label{eq:pcoxy-approx}
    p_{co}^{xy}
    \approx 1- \prod_{v\in \A N_{xy}} [ 1- p_{co}^{xy}(v) ]
    \approx 1- [ 1- p_{co}^{xy}(\star) ]^{K_{xy}},
\end{align}
where the events corresponding to $1- p_{co}^{xy}(v)$, i.e., nodes not being cooperative with respect to $x$ and $y$, are regarded as independent of each other, as an approximation.

Thus $p_{co}$ is determined by averaging $p_{co}^{xy}$ over all $(x,y)$ pairs that are possible to create MCC problems.
It can be proven that these pairs are neighboring pairs $(x,y)$ satisfying ($d_i$ denoting the degree of a node $i$)
\begin{enumerate}
\item $d_x\ge 2$, $d_y\ge 2$, but not $d_x=d_y=2$, or
\item $d_x=d_y=2$, but $x$ and $y$ are not on the same three-cycle (triangle).
\end{enumerate}
This condition is satisfied by all neighboring pairs in a {\em connected} random network, because the connectivity requires a sufficiently high node degree ($5.18\log N$ where $N$ is the total number of nodes\cite{xue04conn}) which is much larger than 2. Therefore, taking expectation of \eqref{eq:pcoxy-approx} over all neighboring pairs using \lref{lem:epk} and \lref{lem:avgarea}-(c),
\begin{align}\label{eq:pcoxy}
p_{co} &= 1- \exp[ -p_{co}^{xy}(\star) \overline{K_{xy}}]\notag\\
  &\approx 1- \exp[ -1.84 n\; p_{co}^{xy}(\star) ].
\end{align}
This completes the analysis.

\subsection{Special Case: Single-Hop Networks}\label{sec:sghop}

Now that all nodes are in the communication range of each other, we have $p_{ni\text{-}oh}=p_{ni\text{-}cts}=1$ according to \pref{prop:notintfoh} and \ref{prop:notintfcts}, which leads to $p_{succ}=p_{oh}=p_{ctrl}$ according to \eqref{eq:poh_psucc}, and $p_{co}^{xy}(v)= p_{ctrl} \; p_{ctrl}^{\star}$ according to \eqref{eq:pcoxyv}. Hence \eqref{eq:pcoxy-approx} reduces to
\begin{align*}
p_{co}^{xy} = 1- (1- p_{ctrl} \; p_{ctrl}^{\star})^{K_{xy}},
\end{align*}
where $K_{xy}$ is the number of all possible cooperative nodes with respect to $x$ and $y$, leading to $K_{xy}=n-4$.
So, as the average of $p_{co}^{xy}$,
\begin{align}\label{eq:sghop}
    p_{co} = 1- (1- p_{ctrl}\; p_{ctrl}^{\star})^{n-4},
\end{align}
where $p_{ctrl}$ is given below, by solving the equations in \sref{sec:balance},
\begin{align*}
p_{ctrl} &= \frac{1}{2} (1 - \lambda T_d +
    \sqrt{1+\lambda T_d (\lambda T_d - 6)}\ ),
\end{align*}
and $p_{ctrl}^{\star}$ is given below, by reducing \eqref{eq:pctrlstar} with $w=0$,
\[ p_{ctrl}^{\star} = \frac{g(\lambda_w) - g(\lambda_c+\lambda_w)} {T_d - g(\lambda_c)} \]
\begin{align*}
\text{where \ }
\lambda_c &= \frac{1}{2} ( \frac{1- \sqrt{1+\lambda T_d (\lambda T_d - 6)}}
    {\lambda T_d^2} -\frac{3}{T_d}),\\
\lambda_w &= \frac{1- \sqrt{1+\lambda T_d (\lambda T_d - 6)}}{T_d} -\lambda.
\end{align*}

\section{Investigating $p_{co}$ with DISH}\label{sec:simu}

We verify the analysis in both single-hop and multi-hop networks and identify key findings therein. We also investigate the correlation between $p_{co}$ and network performance.

\subsection{Protocol Design and Simulation Setup}\label{sec:proto}

\subsubsection{Model-Based DISH}
This is a multi-channel MAC protocol based on the protocol framework described in \sref{sec:model}. Key part of its pseudo-code is listed below, where $S_{ctrl}$ is the control channel status (FREE/BUSY) detected by the node running the protocol, $S_{node}$ is the node's state (IDLE/TX/RX, etc.), $L_{queue}$ is the node's current queue length, and they are initialized as FREE, IDLE and 0, respectively.
The frame format of \texttt{McRTS} and \texttt{McCTS} is shown in \fref{fig:format}, where we can see that they carry channel usage information. A node that overhears \texttt{McRTS} or \texttt{McCTS} will cache the information in a {\em channel usage table} shown in \fref{fig:chtab}, where {\tt Until} is converted from \texttt{Duration} by adding the node's own clock.

\begin{figure}[htp]
\begin{minipage}[b]{0.59\linewidth}
\includegraphics[trim=1mm 2mm 2mm 1mm,clip,width=\linewidth]{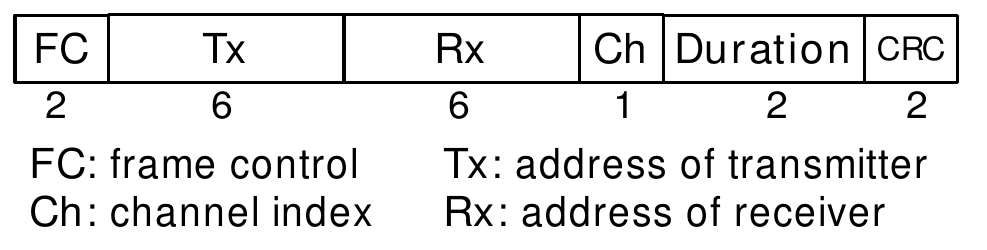}
\caption{Frame format of {\tt McRTS} and {\tt McCTS}.}\label{fig:format}
\end{minipage}\hfil
\begin{minipage}[b]{0.37\linewidth}
\rightline{
\includegraphics[width=\linewidth]{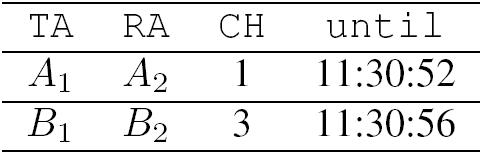}}
\caption{Channel usage table.}\label{fig:chtab}
\end{minipage}
\end{figure}

\floatname{algorithm}{Procedure}

\begin{algorithm}[htb]
\caption{PKT-ARRIVAL}[Called when a data packet arrives]
\label{alg:pktarr}
\begin{algorithmic}[1]
\STATE enqueue the packet, $L_{queue}$++
\IF{$S_{ctrl}=FREE \wedge S_{node}=IDLE \wedge L_{queue}=1$}
\STATE call ATTEMPT-RTS
\ENDIF
\end{algorithmic}
\end{algorithm}

\begin{algorithm}[htb]
\caption{ATTEMPT-RTS}[Called by PKT-ARRIVAL or CHECK-QUEUE]
\label{alg:rts}
\begin{algorithmic}[1]
\STATE construct a set $\A F$ of free channel indexes using channel usage table
\IF{$\A F\neq\phi$}
\STATE send \texttt{McRTS} with \texttt{CH}$:=$RANDOM$(\A F)$
\ELSE
\STATE Timer $\leftarrow \min(\texttt{until}-now)$
\WHILE{$S_{ctrl}=FREE\ \wedge$ Timer not expired}
\STATE wait \COMMENT{carrier sensing remains on}
\ENDWHILE
\IF{Timer expired}
\STATE call CHECK-QUEUE
\ELSE
\STATE call PASSIVE \COMMENT{receive a control message}
\ENDIF
\ENDIF
\end{algorithmic}
\end{algorithm}

\begin{algorithm}[htb]
\caption{CHECK-QUEUE}[Called when $S_{ctrl}=FREE \wedge S_{node}=IDLE$ changes from $FALSE$ to $TRUE$]
\label{alg:sendca}
\begin{algorithmic}[1]
\IF{$L_{queue}>0$}
\STATE Timer $\leftarrow$ RANDOM$(0,10b)$ \COMMENT{FAMA\cite{fama95,fama99}}
\WHILE{$S_{ctrl}=FREE\ \wedge$ Timer not expired}
\STATE wait \COMMENT{carrier sensing remains on}
\ENDWHILE
\IF{Timer expired}
\STATE call ATTEMPT-RTS
\ELSE
\STATE call PASSIVE \COMMENT{receive a control message}
\ENDIF
\ENDIF
\end{algorithmic}
\end{algorithm}

\ifdefined\EA
\begin{wrapfigure}{r}{0.4\textwidth}
  \begin{center}
    \includegraphics[trim=1mm 2mm 1mm 2mm,clip,width=0.35\textwidth]{fig/format_rtscts}
  \end{center}
  \caption{The frame format of \texttt{McRTS} and \texttt{McCTS}.}
  \label{fig:format}
\end{wrapfigure}
A node that overhears \texttt{McRTS} or \texttt{McCTS} (the frame format is shown in \fref{fig:format}) will cache the channel usage information carried by the packet in a {\em channel usage table}. Each entry of this table has four fields: \texttt{TA}, \texttt{RA}, \texttt{CH}, and \texttt{until}, where \texttt{until} is converted from \texttt{duration} (in the packet) by adding to the node's current time.
\fi

As is based on the system model, this protocol does not use a concrete DISH mechanism, i.e., cooperation is treated as a resource while not actually utilized.

\subsubsection{Ideal DISH}

This protocol is by adding an ideal cooperating mechanism to the model-based DISH. Each time when an MCC problem is created by nodes $x$ and $y$ and if at least one cooperative node is available, the node that is on the control channel, i.e., node $y$, will be informed without any message physically sent, and then back off to avoid the MCC problem.

\subsubsection{Real DISH}

In this protocol, cooperative nodes will physically send cooperative messages to inform a transmitter or receiver of the MCC problem so that it will backoff. We design this real DISH by adapting CAM-MAC\cite{tie06cam}.
The only change that we made is that, since in CAM-MAC a transmitter will send a PRA and a CFA, and a receiver will send a PRB and a CFB, during the control channel handshake, we change these control packet sizes such that $\|PRA\|+\|CFA\|=\|McRTS\|=\|PRB\|+\|CFB\|=\|McCTS\|$, where $\|\cdot\|$ gives the size of a packet.

\subsubsection{Simulation Setup}

There are six channels of data rate 1Mb/s each.
Data packets arrive at each node as a Poisson process.
The uniform traffic pattern as in the model is used.
Traffic load $\lambda$ (pkt/s), node density $n$ (1/$R^2$), and packet size $L$ (byte) will vary in simulations. In multi-hop networks, the network area is 1500m$\times$1500m and the transmission range is 250m. Each simulation is terminated when a total of 100,000 data packets are sent over the network, and each set of results is averaged over 15 randomly generated networks.

\subsection{Investigation with Model-Based DISH}\label{sec:verify}

\begin{figure}[tbp]
\centerline{
\subfigure[$p_{co}$ versus $\lambda$ ($n=8$), single-hop.]
    {\includegraphics[trim=4mm 1mm 11mm 7mm,clip,width=0.5\linewidth]{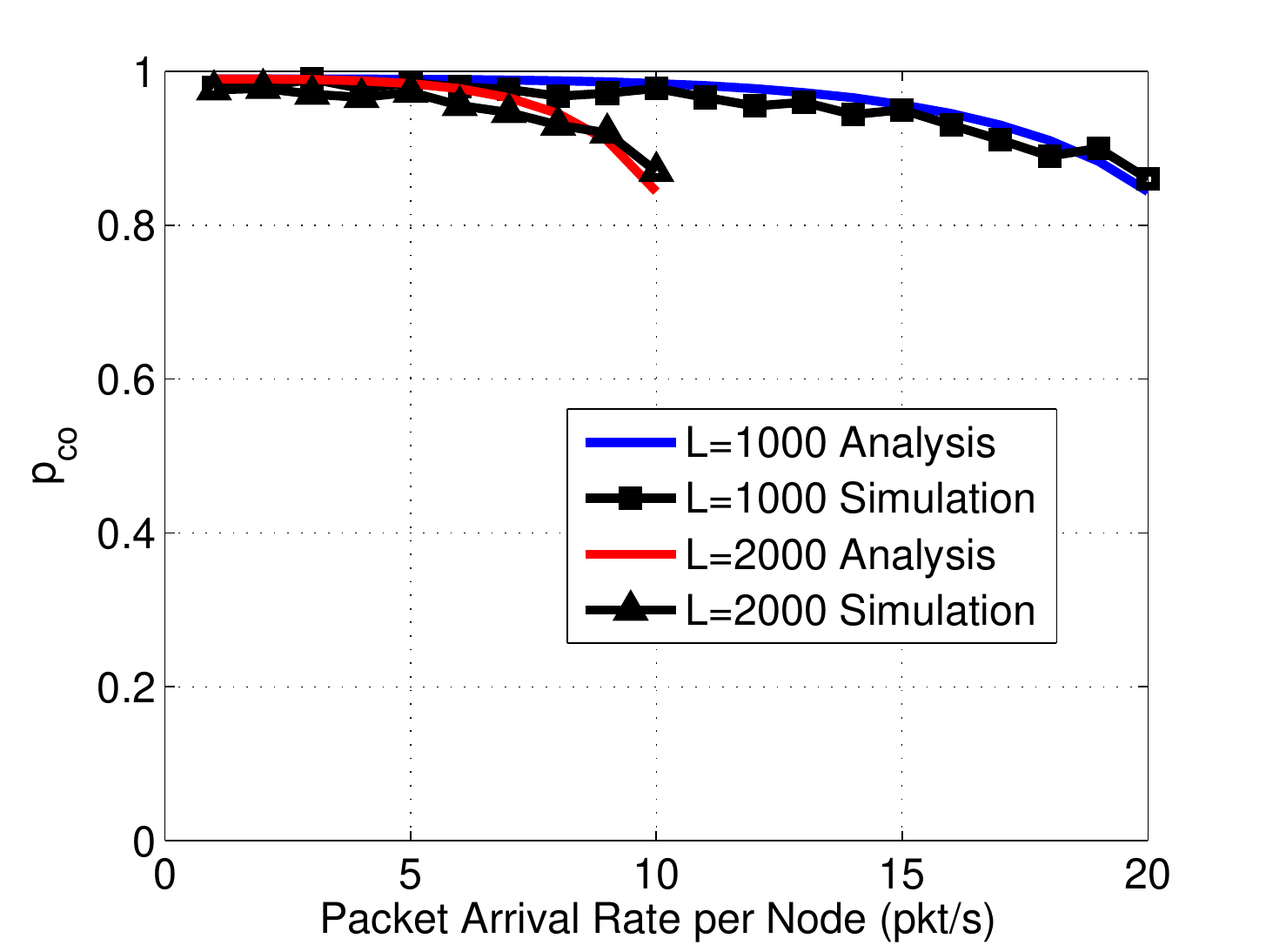}\label{fig:sg_load}}
\subfigure[$p_{co}$ versus $n$ ($\lambda=10$), single-hop.]
    {\includegraphics[trim=4mm 1mm 11mm 7mm,clip,width=0.5\linewidth]{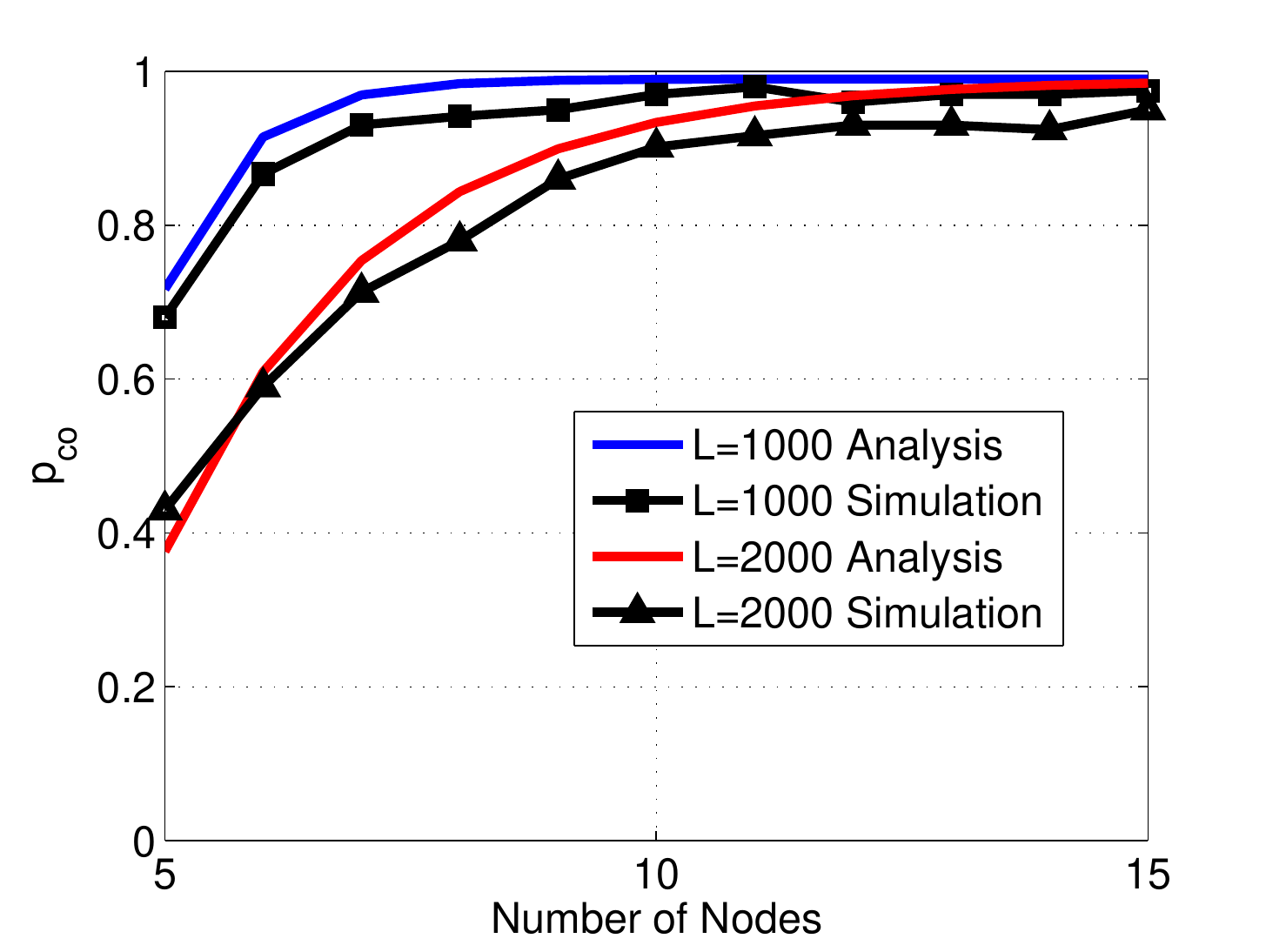}\label{fig:sg_node}}}
\centerline{
\subfigure[$p_{co}$ versus $\lambda$ ($n=10$), multi-hop.]
    {\includegraphics[trim=4mm 1mm 11mm 7mm,clip,width=0.5\linewidth]{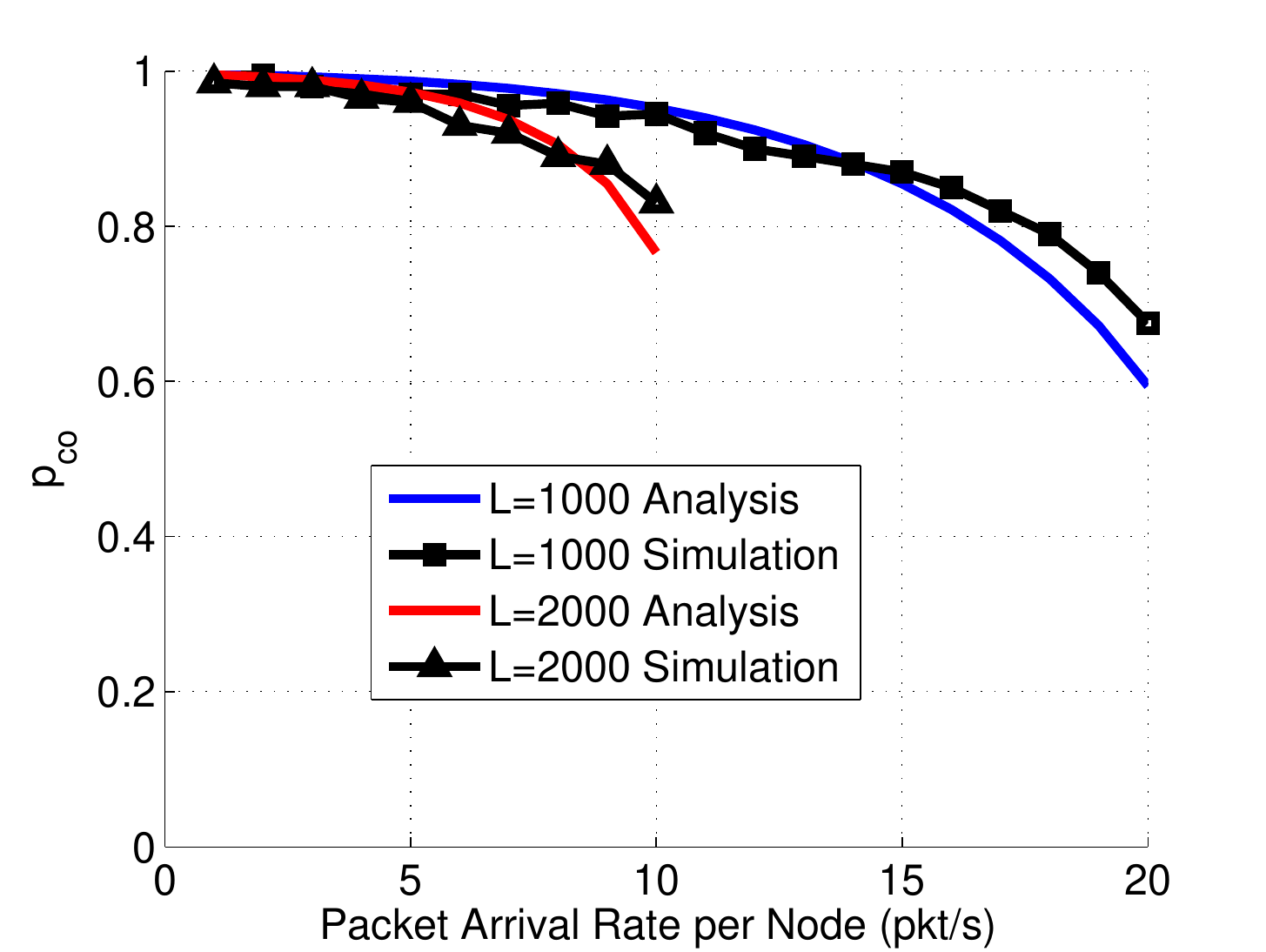}\label{fig:mt_load}}
\subfigure[$p_{co}$ versus $n$ ($\lambda=10$), multi-hop.]
    {\includegraphics[trim=4mm 1mm 11mm 7mm,clip,width=0.5\linewidth]{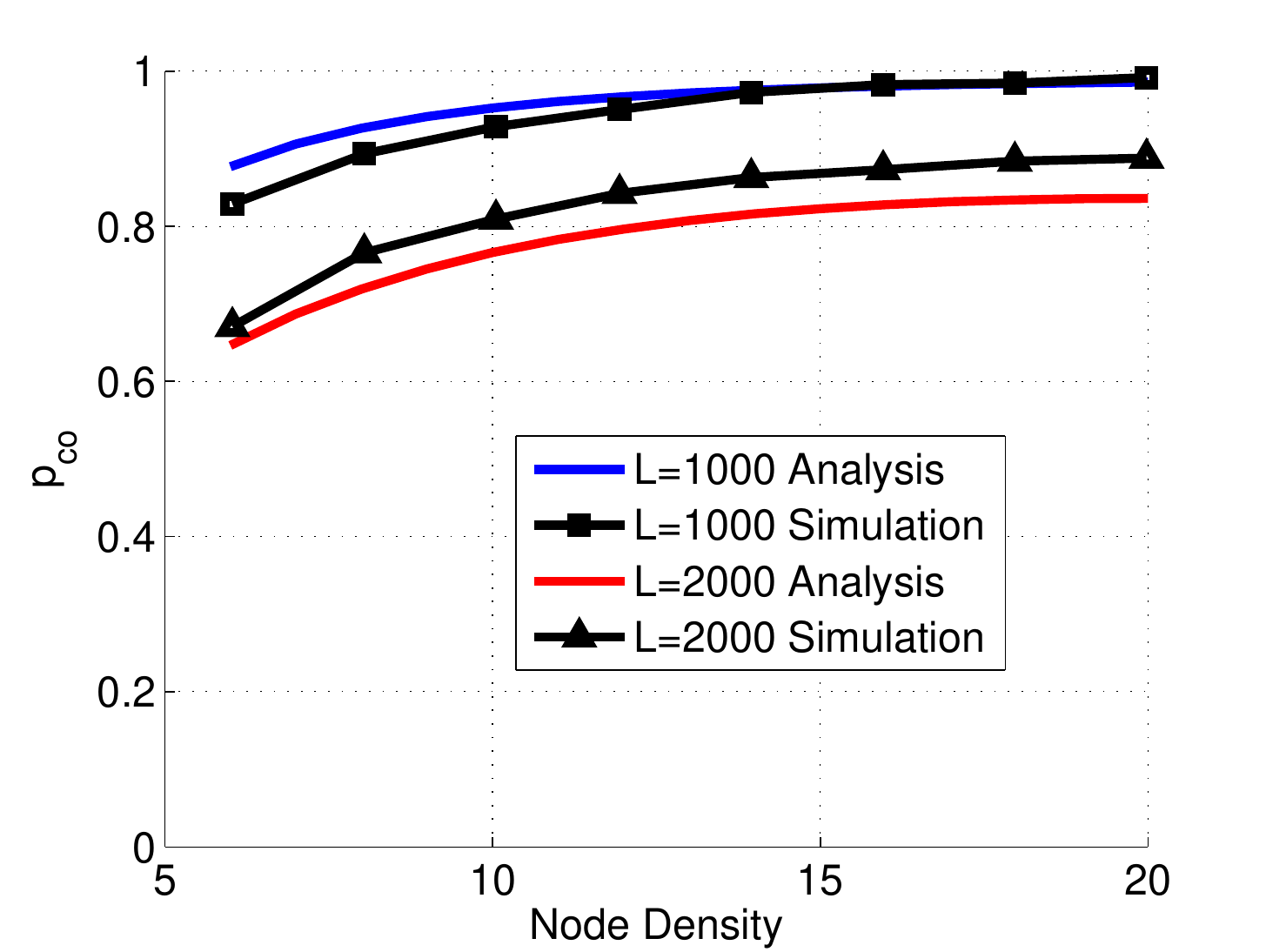}\label{fig:mt_node}}}
\caption{Impact of traffic load and node density, with different packet sizes. The value ranges of X axes are chosen such that the network is stable.}
\label{fig:result}
\end{figure}

The $p_{co}$ obtained via analysis and simulations are compared in \fref{fig:result}. We see a close match between them, with a deviation of less than 5\% in almost all single-hop scenarios, 
and less than 10\% in almost all multi-hop scenarios. Particularly, the availability of cooperation is observed to be at a high level ($p_{co}>0.7$ in most cases), which suggests that a large percentage of MCC problems would be avoided by exploiting DISH, and DISH is feasible to use in multi-channel MAC protocols. ({\bf Finding 1})

Specifically, \fref{fig:sg_load} and \fref{fig:mt_load} consistently show that, in both single-hop and multi-hop networks, $p_{co}$ monotonically decreases as $\lambda$ increases. The reasons are two folds. First, as traffic grows, each node spends more time on data channels for data transmission and reception, which reduces $p_{ctrl}$ and hence the chance of overhearing control messages ($p_{oh}$), resulting in lower $p_{co}$. Second, as the control channel is the rendezvous to set up all communications, larger traffic intensifies the contention and introduces more interference to the control channel, which is hostile to messages overhearing and thus also reduces $p_{co}$.

\fref{fig:sg_node} and \fref{fig:mt_node} show that $p_{co}$ monotonically increases as $n$ increases, and is concave. The increase of $p_{co}$ is because MCC problems are more likely to have cooperative nodes under a larger node population, while the deceleration of the increase is because more nodes also generate more interference to the control channel.

An important message conveyed by this observation is that, although a larger node density creates more MCC problems (e.g., more channel conflicts as data channels are more likely to be busy), it also boosts the availability of cooperation which avoids more MCC problems. This implies that the performance degradation can be mitigated. ({\bf Finding 2})

In both single-hop and multi-hop networks, a larger packet size $L$ corresponds to a lower $p_{co}$. However, note that this is observed under the same {\em packet} arrival rate (pkt/s), which means actually a larger {\em bit} arrival rate for a larger $L$, and can be explained by the previous scenarios of $p_{co}$ versus $\lambda$. Now if we consider the same {\em bit} arrival rate, by examining the two analysis curves in \fref{fig:mt_load} where we compare $p_{co}$ with respect to the same $\lambda\cdot L$ product, e.g., ($\lambda=5,L=2000$) versus ($\lambda=10,L=1000$), and ($\lambda=10,L=2000$) versus ($\lambda=20,L=1000$), then we will see that a larger $L$ corresponds to a {\em higher} $p_{co}$, which is contrary to the observation under the same {\em packet} arrival rate. The explanation is that, for a given bit arrival rate, increasing $L$ reduces the number of packets and hence {\em fewer control channel handshakes} are required, thereby alleviating control channel interference. ({\bf Finding 3})

\begin{figure}[tbp]
\centering
\includegraphics[trim=1cm 7mm 1cm 13mm,clip,width=0.7\linewidth]{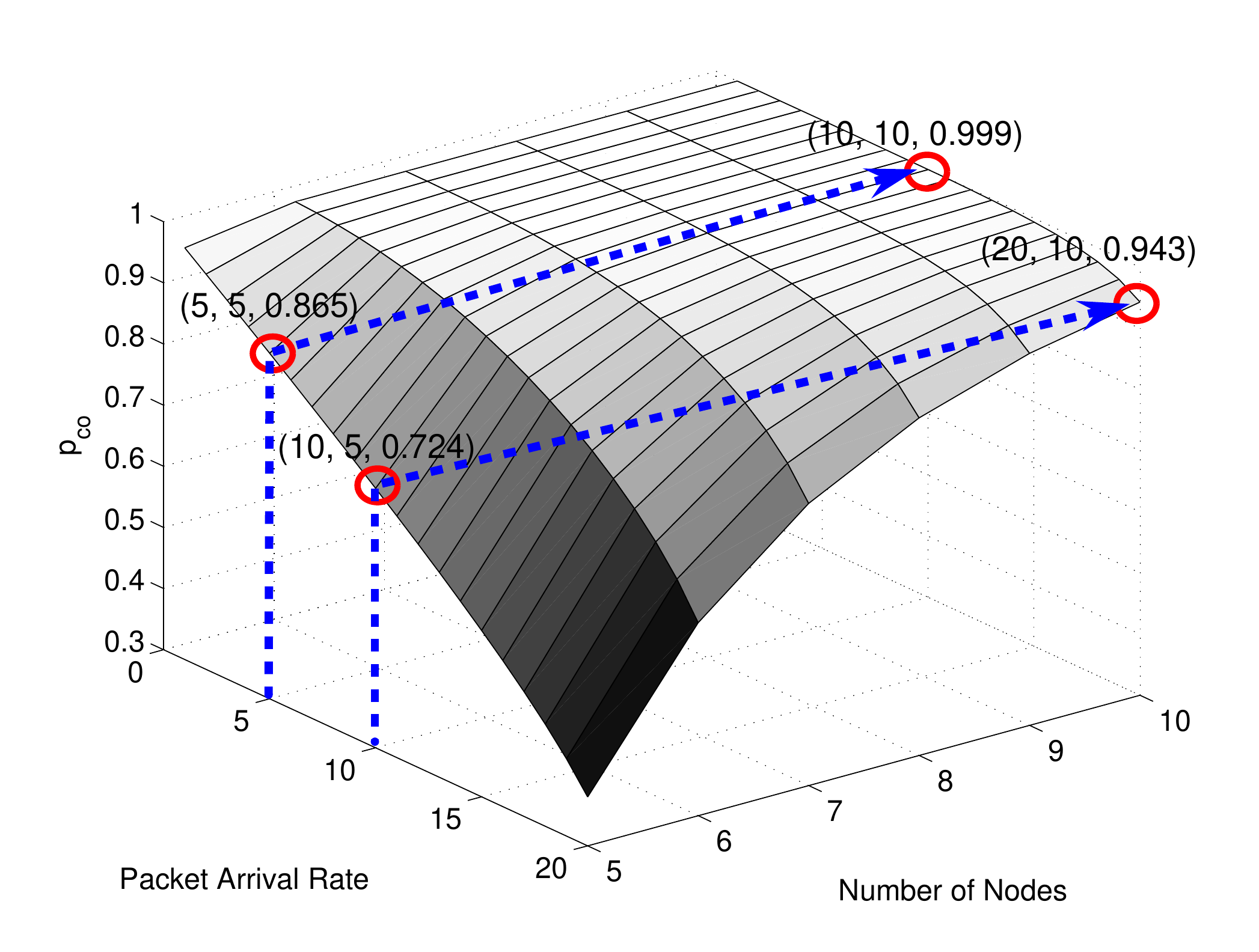}
\caption{$p_{co}$ versus $\lambda$ and $n$. Each of the two arrows indicates a multiplicative increase of $\lambda$ and $n$ with the same factor (two).}
\label{fig:3d_sg_L1k}
\end{figure}

The above results indicate that node density and traffic load affect the availability of cooperation in {\em opposite} ways. This section aims to find which one dominates over the other.
In \fref{fig:3d_sg_L1k}, we plot the relationship of $p_{co}$ versus $\lambda$ and $n$, given $L=1000$ and based on the {\em analytical} result for single-hop networks.
We multiplicatively increase $\lambda$ and $n$ with the same factor (two), and find that, when increasing ($\lambda,n$) from (5,5) to (10,10), $p_{co}$ keeps {\em increasing} from 0.865 to 0.999, and when increasing ($\lambda$,n) from (10,5) to (20,10), $p_{co}$ keeps {\em increasing} from 0.724 to 0.943.

This investigation shows that $n$ is the dominating factor over $\lambda$ that determines the variation of $p_{co}$. This implies that DISH networks should have better scalability than non-DISH networks, since $p_{co}$ increases when both traffic load and node density scale up. ({\bf Finding 4})

\subsection{Investigation with Ideal DISH}\label{sec:ideal}

The results of comparison are shown in \fref{fig:stable}, where $p_{co}$ with ideal DISH well matches $p_{co}$ of analysis. This confirms {\bf Findings 1-3}, and we speculate the reasons to be as follows. With ideal DISH, a transmitter may be informed of a deaf terminal problem and thus will backoff for a fairly long time, which leads to {\em fewer} {\tt McRTS} being sent. On the other hand, a node may also be informed of a channel conflict problem and thus will re-select channel and retry shortly, which leads to {\em more} control messages being sent. Empirically, the latter case has more significant effect, which means that, overall, there will be an {\em increase} of control messages being sent. This boosts interference and thus would reduce $p_{co}$. However, nodes will also stay {\em longer} on the control channel due to less use of conflicting data channels, which would elevate $p_{co}$. Consequently, $p_{co}$ does not change noticeably.

\begin{figure}[tbp]
\centerline{
\subfigure[Impact of $\lambda$ ($n=10$).]
    {\includegraphics[trim=11mm 1mm 1cm 7mm,clip,width=0.5\linewidth]{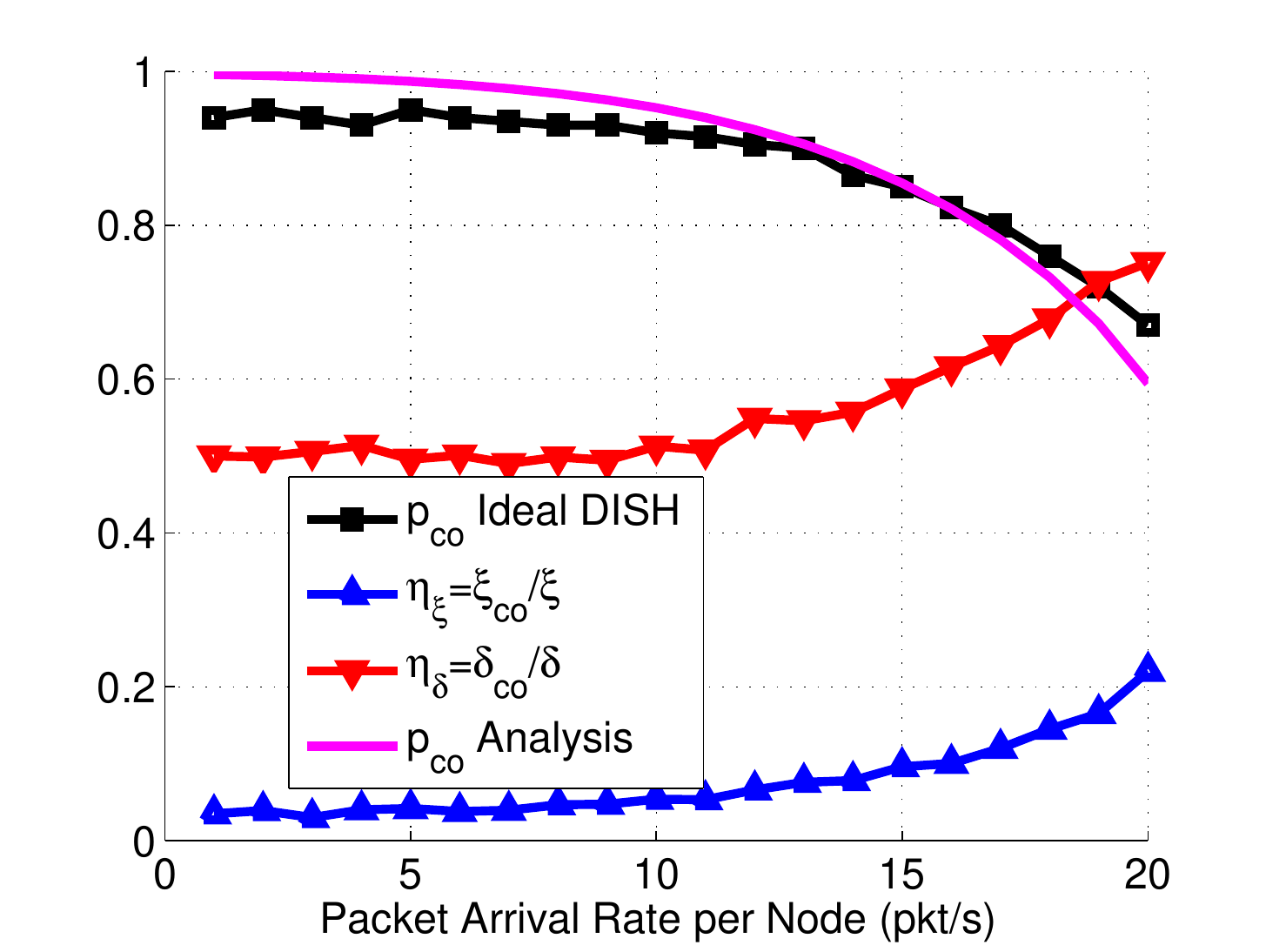}\label{fig:stable_ratio_load}}
\subfigure[Impact of $n$ ($\lambda=10$).]
    {\includegraphics[trim=11mm 1mm 1cm 7mm,clip,width=0.5\linewidth]{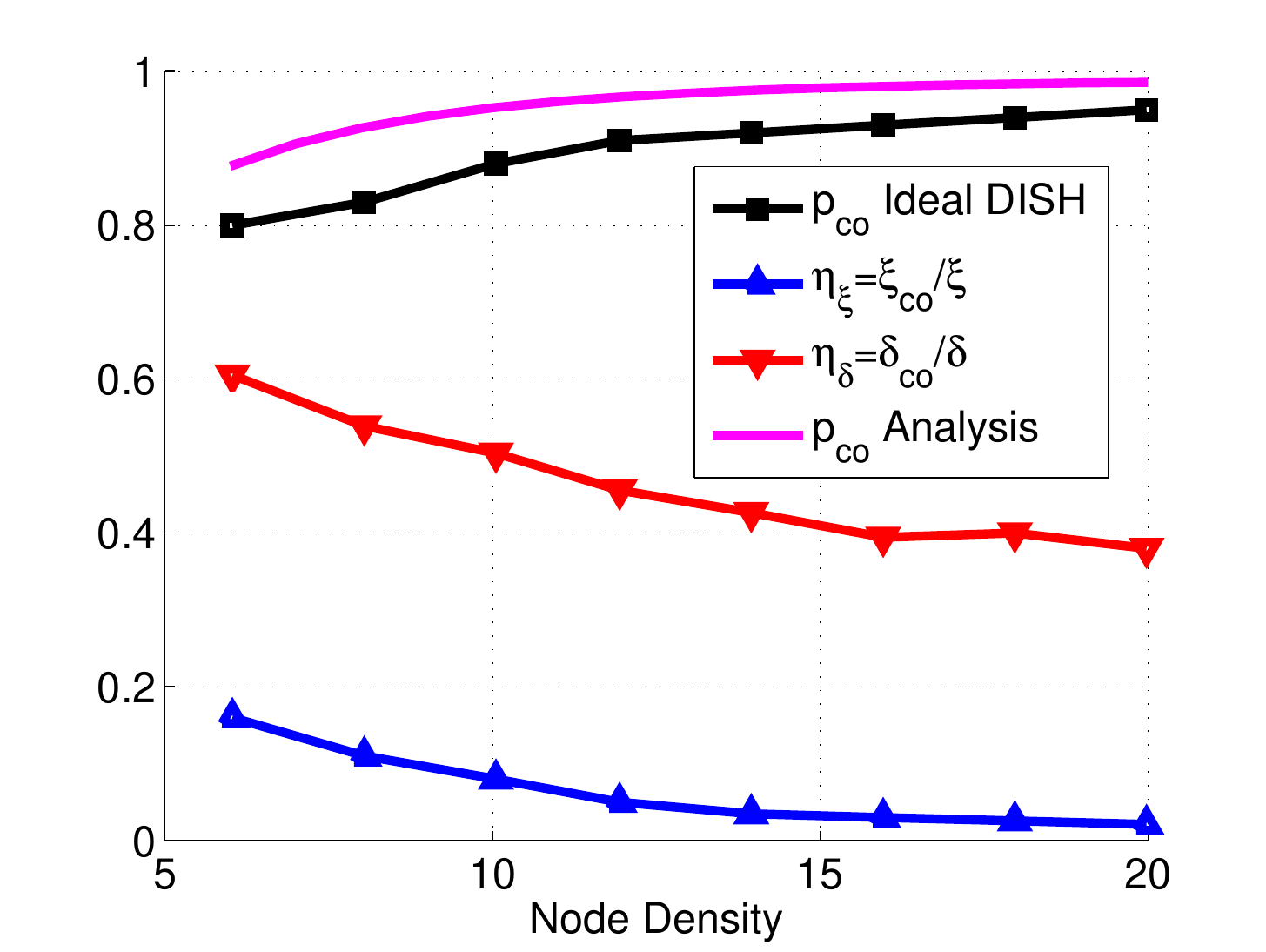}\label{fig:stable_ratio_node}}}
\caption{Investigating $p_{co}$ with ideal DISH in stable networks. This includes (i) verification of analysis, and (ii) correlation between $p_{co}$ and ($\eta_\xi, \eta_\delta$) (ratio of data collision, ratio of packet delay). $L=1000$ byte.}
\label{fig:stable}
\vspace{-3mm}\end{figure}

We investigate how $p_{co}$ correlates to network performance --- specifically, data channel collision rate $\xi$, packet delay $\delta$, and aggregate throughput $S$.
We consider both stable networks and saturated networks under multi-hop scenarios.

\ifdefined\JNL
\begin{figure*}[tbp]
\centerline{
\subfigure[Data channel collision rate.]
    {\includegraphics[trim=3mm 0 1cm 7mm,clip,width=0.31\linewidth]{stable_coll_node}\label{fig:stable_coll_node}}
\subfigure[Packet delay.]
    {\includegraphics[trim=3mm 0 1cm 7mm,clip,width=0.31\linewidth]{stable_delay_node}\label{fig:stable_delay_node}}
\subfigure[Evaluation of $p_{co}$, $\eta_\xi$ and $\eta_\delta$.]
    {\includegraphics[trim=6mm 0 1cm 7mm,clip,width=0.31\linewidth]{stable_ratio_node}\label{fig:stable_ratio_node}}}
\caption{Performance evaluation in stable networks by varying node density ($\lambda=10$ pkt/s).}
\label{fig:stable_node}
\end{figure*}
\begin{figure*}[tbp]
\centerline{
\subfigure[Data channel collision rate.]
    {\includegraphics[trim=3mm 0 1cm 7mm,clip,width=0.31\linewidth]{stable_coll_load}\label{fig:stable_coll_load}}
\subfigure[Packet delay.]
    {\includegraphics[trim=3mm 0 1cm 7mm,clip,width=0.31\linewidth]{stable_delay_load}\label{fig:stable_delay_load}}
\subfigure[Evaluation of $p_{co}$, $\eta_\xi$ and $\eta_\delta$.]
    {\includegraphics[trim=6mm 0 1cm 7mm,clip,width=0.31\linewidth]{stable_ratio_load}\label{fig:stable_ratio_load}}}
\caption{Performance evaluation in stable networks by varying traffic load ($n=10/r^2$).}
\label{fig:stable_load}
\vspace{-5mm}\end{figure*}
\fi

In stable networks, we measure ($\xi,\delta$) and $(\xi_{co},\delta_{co})$ when without and with cooperation (ideal DISH), respectively. Then we compute $\eta_\xi=\xi_{co}/\xi$ and $\eta_\delta=\delta_{co}/\delta$ to compare to $p_{co}$ with ideal DISH.  The first set of results, by varying traffic load $\lambda$, is shown in \fref{fig:stable_ratio_load}. %
\ifdefined\JNL
Interestingly, although as seen from \fref{fig:stable_coll_load} and \fref{fig:stable_delay_load}, cooperation reduces data collisions and packet delay similarly as the previous results, the ratio $\eta_\xi$ and $\eta_\delta$ increases, as seen from \fref{fig:stable_ratio_load} and is contrary to the previous results in \fref{fig:stable_ratio_node}. This actually confirms one of our previous findings in \sref{sec:verify}: increasing $\lambda$ suppresses $p_{co}$, and thus the performance improvement via cooperation is diminishing.
\fi
We observe that the two {\em ascending} and {\em convex} curves of $\eta_\xi$ and $\eta_\delta$ approximately {\em reflect} the {\em descending} and {\em concave} curve of $p_{co}$, which hints at a {\em linear} or {\em near-linear} relationship between $p_{co}$ and these two performance ratios. That is, $\eta_\xi+p_{co}\approx c_1,\ \eta_\delta+p_{co}\approx c_2$, where $c_1$ and $c_2$ are two constants. The second set of results, by varying node density $n$, is shown in \fref{fig:stable_ratio_node}.
\ifdefined\JNL
As seen from \fref{fig:stable_coll_node}, the data channel collision rate without cooperation, $\xi$, increases exponentially, whereas the rate with cooperation, $\xi_{co}$, increases very mildly and is extremely low. Likewise, \fref{fig:stable_delay_node} indicates that the packet delay with and without cooperation increase in a slow and a fast speed, respectively. Finally, the two ratios, $\eta_\xi$ and $\eta_\delta$, are depicted in \fref{fig:stable_ratio_node} together with $p_{co}$ of analysis and simulation,
\fi
On the one hand, $\eta_\xi$ and $\eta_\delta$ decreases as $n$ increases, which is contrary to \fref{fig:stable_ratio_load}. This confirms our earlier observations: $n$ is amicable whereas $\lambda$ is hostile to $p_{co}$ (the smaller $\eta_\xi$ and $\eta_\delta$, the better performance cooperation offers). On the other hand, the correlation between $p_{co}$ and the performance ratios is found again: as $p_{co}$ increases on a concave curve, it is reflected by $\eta_\xi$ and $\eta_\delta$ which decrease on two
convex curves.

In saturated networks, we vary node density $n$ and measure aggregate throughput without and with cooperation (ideal DISH), as $S$ and $S_{co}$, respectively. Then we compute $\eta_S=S/S_{co}$ (note that this definition is inverse to $\eta_\xi$ and $\eta_\delta$, such that $\eta_S\in[0,1]$) to compare to $p_{co}$ with ideal DISH. The results are summarized in \fref{fig:satu_ratio}.
We see that (i) $p_{co}$ grows with $n$, which conforms to Finding 2, and particularly, (ii) the declining and convex curve of $\eta_S$ reflects the rising and concave curve of $p_{co}$, which is consistent with the observation in stable networks. In addition, here $p_{co}$ is lower than the $p_{co}$ in stable networks. This is explained by our earlier result that higher traffic load suppresses $p_{co}$.

\ifdefined\JNL
$S$ levels off early (around $n=10/r^2$) and starts to decline later (around $n=16/r^2$), but $S_{co}$ keeps increasing, with the slope only slightly decreasing, and nearly doubles $S$ at $n=20$.
\fi

In summary, the experiments in stable networks and saturated networks both demonstrate a strong correlation ({\em linear} or {\em near-linear} mapping) between $p_{co}$ and network performance ratio in terms of typical performance metrics. This may significantly simplify performance analysis for cooperative networks via bridging the {\em nonlinear} gap between network parameters and $p_{co}$, and also suggests that $p_{co}$ be used as an appropriate performance indicator itself. ({\bf Finding 5})

\begin{figure}[tbp]
\centering
\includegraphics[trim=4mm 1mm 6mm 6mm,clip,width=0.55\linewidth]{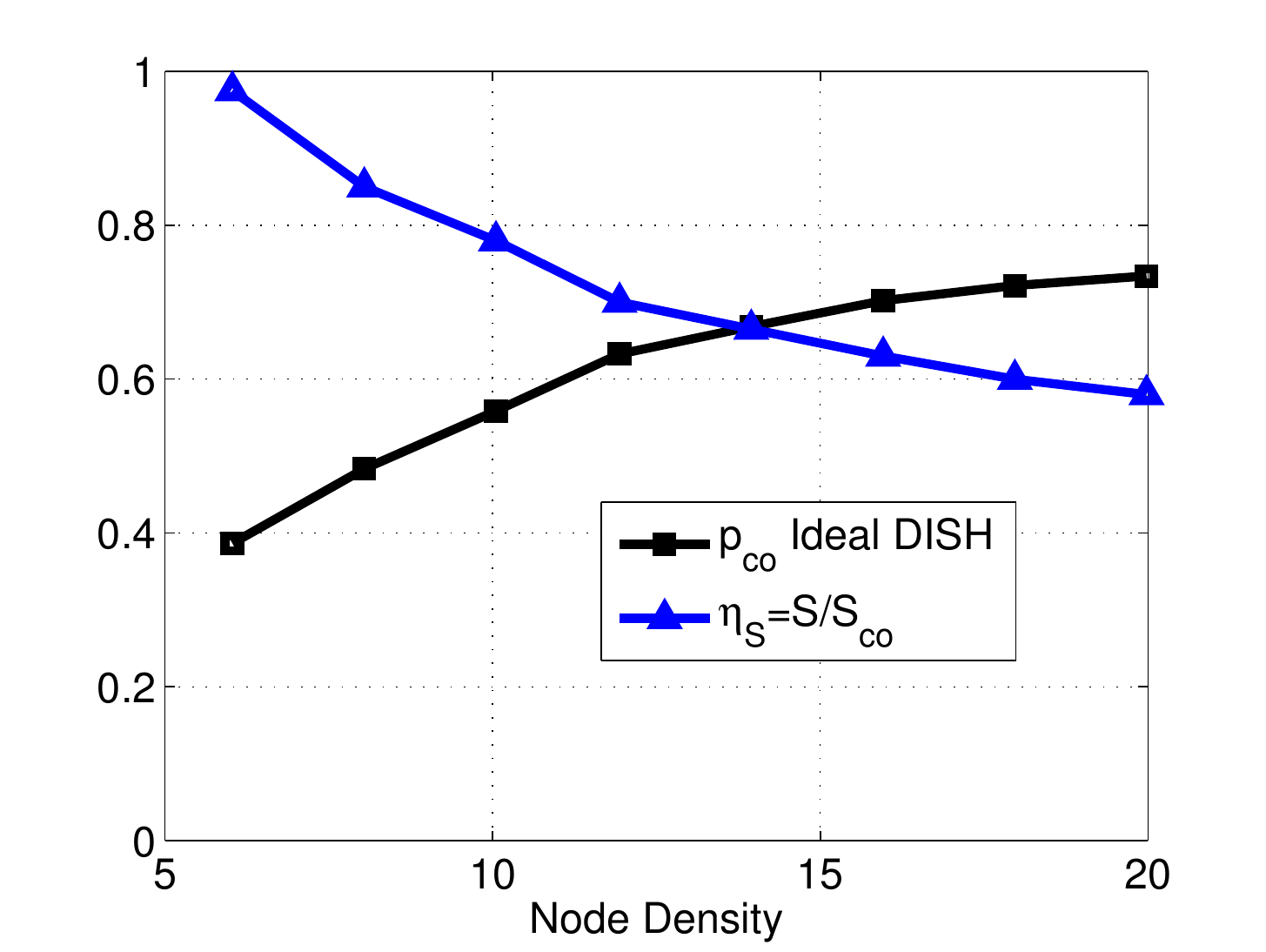}
\caption{Investigating $p_{co}$ with ideal DISH in saturated networks: correlation between $p_{co}$ and $\eta_S$ (throughput ratio). $L=1000$ byte.}
\label{fig:satu_ratio}
\vspace{-3mm}\end{figure}

The explanation to this linear or near-linear relationship should involve intricate network dynamics. We speculate that the rationale might be that (i) MCC problems are an essential performance {\em bottleneck} to multi-channel MAC performance, and (ii) $p_{co}$ is equivalent to the {\em ratio} of MCC problems that can be avoided by DISH. In any case, we reckon that this observation may spur further studies and lead to more thought-provoking results.

\subsection{Investigation with Real DISH}\label{sec:cammac}

Due to space constraints, we discuss the results without presenting figures because the results were found to be similar to those with ideal DISH.

The deviation between simulations and analysis, which maximally reached $\sim$15\%, was found to be slightly larger than that with model-based or ideal DISH. However, the overall trends still match, and {\bf Findings 1-4} are confirmed.  For explanation, we speculate that, in real DISH, there are much more control messages (including cooperative messages) being sent, which result in more control channel interference and thus would diminish $p_{co}$. But on the other hand, nodes stay longer on the control channel due to the same reason as for the case of ideal DISH (i.e., less use of conflicting data channels), which would increase $p_{co}$. As a result, $p_{co}$ with real DISH does not deviate significantly from the analysis.

The near-linear relationship between $p_{co}$ and ($\eta_\xi, \eta_\delta, \eta_S$) was observed. This confirms {\bf Finding 5}. As a possible explanation, the remark following Finding 5 in \sref{sec:ideal} applies.  Although the absolute values of these quantities were found to differ from those with ideal DISH (ranging between 3-18\%), this is not the primary concern.

\section{Conclusion}\label{sec:conc}

Distributed Information SHaring (DISH) represents a mechanism different from data relaying to exploit cooperative diversity. This paper gives the first theoretical treatment of this new notion of cooperation, by addressing the availability of cooperation via a metric $p_{co}$. Instead of directly analyzing throughput which is an open problem in general and rendered much more complicated when considered together with multiple channels and DISH, our approach is to analyze $p_{co}$ first and then correlate $p_{co}$ with performance metrics including throughput. We conduct analysis in a multi-hop multi-channel wireless network and, to verify its validity and study its implications, investigate $p_{co}$ with three different contexts of DISH: model-based DISH, ideal DISH, and real DISH. The investigation validates that our analysis accurately captures the interaction among network parameters, which allows us to draw important findings of $p_{co}$ with respect to network dynamics. It also reveals a near-linear relationship between $p_{co}$ and network performance, which may greatly aid in performance analysis for cooperative networks and also suggests $p_{co}$ to be a proper performance indicator itself.

This work is the first that explicitly presents DISH, together with a detailed study offering meaningful insights into understanding cooperation. Based on our findings, we conclude that $p_{co}$ is a useful metric capable of characterizing the performance of {\em DISH networks}. We contend that DISH is useful and practical enough to be a part of future cooperative communication networks.

\bibliographystyle{IEEEtran}


\appendix

\section{Proofs and Derivations}

\subsection{Proof of \pref{prop:notintfoh}}
\begin{proof}
In the case of $u \in \A N_{vi}$, no matter $u$ is on the control channel at $s_i$, or is on a data channel at $s_i$ but switches to the control channel before $s_i+b$, it will sense a busy control channel (due to CSMA) and thus keep silent.

\begin{figure}[htbp]
\centering\includegraphics[width=0.88\linewidth]{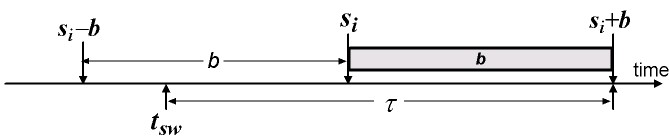}
\caption{The vulnerable period of $v$ is $[s_i-b,s_i+b]$, in which node $u\in \A N_{v\backslash i}$ should not start transmission on the control channel.}
\label{fig:vp}
\end{figure}

In the case of $u \in \A N_{v\backslash i}$, see \fref{fig:vp}. Note that the vulnerable period of $v$ is $[s_i-b, s_i+b]$ instead of $[s_i, s_i+b]$, because a transmission started within $[s_i-b,s_i]$ will end within $[s_i, s_i+b]$. Therefore, by the total probability theorem,
\begin{align*}
p_{ni\text{-}oh}
 =& \Pr[\A C_u(s_i-b)] \cdot \Pr[\A I_u(s_i-b, s_i+b)|\A C_u(s_i-b)]\\
  &+\Pr[\overline{\A C_u(s_i-b)}] \cdot \Pr[\A I_u(s_i-b, s_i+b) |\overline{\A C_u(s_i-b)}]\\
 =&\ p_{ctrl} \cdot e^{-2 \lambda_c b} +(1-p_{ctrl})
  \cdot (1-\frac{2b}{T_d} + \frac{1- e^{-2 \lambda_c b}}{\lambda_c T_d}),
\end{align*}
where $\Pr[\A I_u(s_i-b, s_i+b) |\overline{\A C_u(s_i-b)}]$ is solved by \pref{prop:switch}.
\end{proof}

\subsection{Proof of \pref{prop:notintfcts}}
\begin{proof}
The case of $u \in \A N_{ij}$ follows the same line as the proof for \pref{prop:notintfoh}.   In the case of $u \in \A N_{i\backslash j}$, the only difference from \pref{prop:notintfoh} is that now we are implicitly given the fact that $i$ was transmitting \texttt{McRTS} during $[s_j-b, s_j]$. This excludes $i$'s any neighbor $u$ interfering in $[s_j-b, s_j]$. Therefore $i$'s vulnerable period is $[s_j, s_j+b]$ instead of $[s_j-b, s_j+b]$ as compared to \pref{prop:notintfoh}. So
\begin{align*}
p_{ni\text{-}cts} =& \Pr[\A C_u(s_j-b)] \cdot \Pr[\A I_u(s_j, s_j+b)|\A C_u(s_j-b)]\\
 &+\Pr[\overline{\A C_u(s_j-b)}]\cdot \Pr[\A I_u(s_j, s_j+b)|\overline{\A C_u(s_j-b)}].
\end{align*}
Note that we condition on $\A C_u(s_j-b)$ instead of $\A C_u(s_j)$, because $s_j$ is not an {\em arbitrary} time due to $i$'s \texttt{McRTS} transmission during $[s_j-b, s_j]$, which leads to $\Pr[\A C_u(s_j)]\neq p_{ctrl}$.

First, $\Pr[\A I_u(s_j, s_j+b)|\A C_u(s_j-b)]=1$. This is because, as $\A C_u(s_j-b)\Leftrightarrow \A S_u(s_j-b, s_j)$ which is easy to show, $u$ will successfully overhear $i$'s \texttt{McRTS}, and hence will keep silent in the next period of $b$ to avoid interfering with $i$ receiving \texttt{McCTS}.

Next consider $\Pr[\A I_u(s_j, s_j+b)|\overline{\A C_u(s_j-b)}]$ where $u$ is on a data channel at $s_j-b$. If $u$ switches to the control channel (i) before $s_j$, it will be suppressed by $i$'s \texttt{McRTS} transmission until $s_j$, and thus the vulnerable period of $i$ receiving \texttt{McCTS} is $[s_j, s_j+b]$, (ii) within $[s_j, s_j+b]$, this has been solved by \pref{prop:switch}, or (iii) after $s_j+b$, the probability to solve is obviously 1. Therefore,
\begin{align*}
\Pr[&\A I_u(s_j, s_j+b)|\overline{\A C_u(s_j-b)}]
= \Pr[\Omega_u(s_j-b, s_j)]\; e^{-\lambda_c b}\\
    &+ \Pr[\Omega_u(s_j, s_j+b)]\,(1-\frac{b}{T_d} + \frac{1- e^{-\lambda_c b}}{\lambda_c T_d})\\
    &+ \{1-\Pr[\Omega_u(s_j-b, s_j)]-\Pr[\Omega_u(s_j, s_j+b)]\}\times 1.
\end{align*}
According to \eqref{eq:switch}, $\Pr[\Omega_u(s_j-b, s_j)]=\Pr[\Omega_u(s_j, s_j+b)]=b/T_d$. Then by substitution the proposition is proven.
\end{proof}

\subsection{Derivation of Equation~\eqref{eq:poh-vi}}
\begin{proof}
Based on the proof for the case $u \in \A N_{vi}$ in \pref{prop:notintfoh}, it is easy to show that $\A S_v(s_i, s_i+b)\Leftrightarrow\A C_v(s_i)$. Hence
\[ \Pr[\A S_v(s_i, s_i+b)] = \Pr[\A C_v(s_i)] = p_{ctrl}. \]
Treating events $\A S_v(s_i, s_i+b)$ (node $v$ is silent on the control channel) and $\A I_u(s_i, s_i+b)$ (node $u$ does not interfere the control channel) being independent of each other, as an approximation,
we have
\[ \Pr[\A O (v\leftarrow i)] \approx p_{ctrl} \prod_{u \in \A N_{v\backslash i}} p_{ni\text{-}oh}
        = p_{ctrl} \; p_{ni\text{-}oh}^{K_{v\backslash i}}. \]
\end{proof}

\subsection{Derivation of Equation~\eqref{eq:poh_psucc}}
\begin{proof}
Taking the expectation of $\Pr[\A O (v\leftarrow i)]$ (given by \eqref{eq:poh-vi}) over all neighboring $(v,i)$ pairs using \lref{lem:epk} and \lref{lem:avgarea}-(a):
\begin{align*}
p_{oh} \approx p_{ctrl}\; \bb{E}[p_{ni\text{-}oh}^{K_{v\backslash i}}]
    \approx p_{ctrl}\; \exp[-1.30 n (1-p_{ni\text{-}oh})].
\end{align*}

To solve for $p_{succ}$, notice that for a control channel handshake to be successful, (i) the \texttt{McRTS} must be successfully received by the receiver, with probability $p_{oh}$, and (ii) the \texttt{McCTS} must be successfully received by the transmitter, with probability $\bb E[p_{ni\text{-}cts}^{K_{i\backslash j}}]$ based on \pref{prop:notintfcts} (assuming that $p_{ni\text{-}cts}$ holds for nodes in $\A N_{i\backslash j}$ independently, as an approximation). Therefore,
\begin{align*}
p_{succ} \approx p_{oh} \mathbb{E}[p_{ni\text{-}cts}^{K_{i\backslash j}}]
    \approx p_{oh} \exp[-1.30n(1-p_{ni\text{-}cts})].
\end{align*}
\end{proof}

\subsection{Derivation of Equation~\eqref{eq:pctrlstar}}
\begin{proof}
Recall that node $v$ must stay continuously on the control channel during $[s_x,s_y]$. Let $\tau_c=s_y-s_x$ and suppose $v$ switches to a data channel at $s_x+\tau_w$, then we need $\tau_w>\tau_c$. Hence $p_{ctrl}^{\star} = \Pr(\tau_w>\tau_c)$, where $\tau_c\in [0,T_d]$.

\begin{figure}[tbp]
\centering
\includegraphics[width=0.98\linewidth]{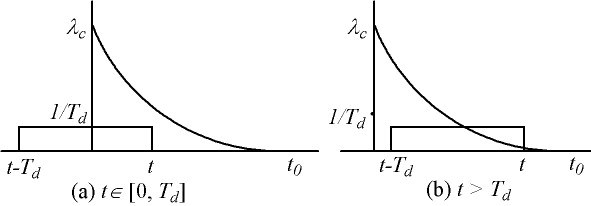}
\caption{The convolution of $\frac{1}{T_d}$ ($t\in [0,T_d]$) and $\lambda_c e^{-\lambda_c t}$ ($t>0$).}
\label{fig:convolute}
\vspace{-5mm}\end{figure}

Denote by $f_{\tau_c}(t)$ the pdf of an unbounded $\tau_c$ ($s_y\in (s_x,\infty)$). The fact that a MCC problem is created by $x$ and $y$ (at $s_y$) implies that $y$ missed $x$'s control message (at $s_x$). This is due to one of the following: (i) $y$ is on the control channel at $s_x$ but interfered, in which case $f_{\tau_c}(t)$ is $\lambda_c e^{-\lambda_c t}$ (ignoring the short interference period which is in the magnitude of $b$, while $\tau_c$ is in the magnitude of $T_d$), (ii) $y$ is on a data channel at $s_x$, in which case $y$ must switch to the control channel before $s_y$. Again see \fref{fig:switch}, where $t_1$ and $t_2$ are now $s_x$ and $s_y$, respectively. Let $\tau_1=t_{sw}-s_x$ and $\tau_2=s_y-t_{sw}$, then $\tau_c=\tau_1+\tau_2$. Note that $\tau_1$ is uniformly distributed in $[0,T_d]$, $\tau_2$ is exponentially distributed with the mean of $1/\lambda_c$, and $\tau_1$ and $\tau_2$ can be regarded as independent. Therefore, $f_{\tau_c}(t)$ is the convolution of $\frac{1}{T_d}$ ($t\in [0,T_d]$) and $\lambda_c e^{-\lambda_c t}$ ($t>0$), which can be calculated by referring to \fref{fig:convolute}, to be $f_{\tau_c}^d (t) =$
\begin{align*}
\frac{1- e^{-\lambda_c t} }{T_d} [u(t)-u(t-T_d)]
         + \frac{ e^{-\lambda_c t} }{T_d} (e^{\lambda_c T_d} -1) u(t-T_d).
\end{align*}
where $u(\cdot)$ is the unit step function.

A weighted sum of the above cases (i) and (ii) gives
\begin{align*}
  f_{\tau_c}(t) = w\; \lambda_c e^{-\lambda_c t} + (1-w)\; f_{\tau_c}^d (t)
\end{align*}
where $w$ is the weight for case (i). To determine $w$, note that the probability of case (ii) is $1-p_{ctrl}$, and the probability of case (i) is $p_{ctrl} (1-p_{ni\text{-}oh}^{K_{y\backslash x}})$ (using \eqref{eq:poh-vi}) whose mean is $ p_{ctrl} (1-\exp[-1.30 n(1-p_{ni\text{-}oh})])$. Therefore
\begin{align*}
w =\frac{p_{ctrl} [1-e^{-1.30n(1-p_{ni\text{-}oh})}]}
{p_{ctrl}[1-e^{-1.30n(1-p_{ni\text{-}oh})}] + (1-p_{ctrl})}
  = \frac{p_{ctrl}-p_{oh}}{1-p_{oh}}.
\end{align*}

Finally we compute $p_{ctrl}^{\star} = \Pr(\tau_w>\tau_c)$ using $f_{\tau_c}(t)$. Recall that $f_{\tau_c}(t)$ is the pdf of an unbounded $\tau_c$ but $\tau_c$ is in fact bounded within $[0,T_d]$, therefore its actual pdf is\\
$f_{\tau_c}(t) / \int_0^{T_d} f_{\tau_c}(t) dt$. Assuming that $\tau_w$ is exponentially distributed with mean $1/\lambda_w$, we have
\begin{align*}
p_{ctrl}^{\star} = \bb E_{\tau_c\in[0,T_d]}\;\Pr(\tau_w>\tau_c)
    = \int_0^{T_d} e^{-\lambda_w t} \frac{f_{\tau_c}(t)}{\int_0^{T_d} f_{\tau_c}(t) dt} dt
\end{align*}
which reduces to \eqref{eq:pctrlstar}.  For $\lambda_w$, noticing that it is the average rate of a node on the control channel switching to data channels, which happens when a node successfully initiates a control channel handshake via \texttt{McRTS} or sends a \texttt{McCTS}, we have $\lambda_w=\lambda_{rts} p_{succ}+\lambda_{cts}$.
\end{proof}

\end{document}